\newtheorem{theorem}{Theorem}
\newtheorem{definition}{Definition}
\newtheorem{proposition}{Proposition}
\newcommand{\R}{\mathbb{R}}
\DeclareMathOperator*{\argmin}{arg\,min}
\DeclareSymbolFont{cyrletters}{OT2}{wncyr}{m}{n}
\DeclareMathSymbol{\Sha}{\mathalpha}{cyrletters}{"58}
\begin{document}

\title{Coupled Splines for Sparse Curve Fitting}

\author{Ic{\'{i}}ar~Llor{\'{e}}ns~Jover,
        Thomas~Debarre,
        Shayan~Aziznejad,
        and~Michael~Unser,~\IEEEmembership{Fellow,~IEEE}%

\thanks{The authors are with the Biomedical Imaging Group, École polytechnique fédérale de Lausanne, 1015 Lausanne, Switzerland (e-mail: iciar.llorensjover@epfl.ch, thomas.debarre@epfl.ch, shayan.aziznejad@epfl.ch, michael.unser@epfl.ch)}
\thanks{This research was supported by the European Research Council (ERC), Grant 692726-GlobalBioIm, and the Swiss National Science Foundation, Grant 200020\_184646 / 1.}
}

\maketitle

\begin{abstract}
We formulate as an inverse problem the construction of sparse parametric continuous curve models that fit a sequence of contour points. Our prior is incorporated as a regularization term that encourages rotation invariance and sparsity. We prove that an optimal solution to the inverse problem is a closed curve with spline components. We then show how to efficiently solve the task using B-splines as basis functions. We extend our problem formulation to curves made of two distinct components with complementary smoothness properties and solve it using hybrid splines. We illustrate the performance of our model on contours of different smoothness. Our experimental results show that we can faithfully reconstruct any general contour using few parameters, even in the presence of imprecisions in the measurements.
\end{abstract}

\begin{IEEEkeywords}
Inverse problems, total variation, sparsity.
\end{IEEEkeywords}
\IEEEpeerreviewmaketitle

\section{Introduction}

\IEEEPARstart{C}{ontour} tracing is a common yet rich subject in the image-processing and computer-graphics community. It has numerous applications, such as component labeling \cite{chang2003component,chang2004linear} or topological structure analysis \cite{suzuki1985topological}. The objective is to produce a contour that accurately separates two regions of a given image. This task is, however, not  without difficulties. Firstly, the edges suffer from discretization effects and intrinsic image noise. Secondly, the smoothness of the contour may need to be nonuniform, since contours are often made of smooth parts joined by sharp discontinuities.
Our goal is to extract a continuous stylized sparse parametric curve that explains a given set of ordered edge points given by possibly inaccurate two-dimensional coordinates, which is particularly relevant for raster-to-vector conversion,  \emph{i.e.} vectorization  \cite{Hori93raster}. Vectorization consists in converting raster data (e.g. pixel images) into a set of continuous functions representing the contours. This is the principle on which fonts or vector formats like pdf or svg, which allow for zooming into the objects without losing resolution, rely on. It is therefore a problem of great importance for the computer graphics community.  Our search for sparsity intimately follows Occam's razor principle of simplicity. Indeed, it heightens our probability of approaching the true curve, as many real-world signals are sparse. This is the principle on which compressed sensing hinges \cite{Donoho06, Eldar12}.

Two main approaches come to mind when thinking of contour tracing. The first one consists in joint edge detection and curve fitting. Parametric active contours are popular examples of this approach as these methods provide efficient tools for the extraction of a contour from an image, for example for point-cloud segmentation \cite{awadallah2014segmentation}. The contour consists in continuous curves that evolve through the optimization of an energy functional and iteratively approximate an image edge \cite{kass1988snakes, delgado2014snakes}. A plethora of parametric snake models can be found in the literature, mostly with model-based energy functionals \cite{Jain1998109, JacobBU04, UhlmannDCRU14, delgado2013spline, thevenaz0801}, or more recently with learning-based approaches \cite{chen2019learning,zhang2020deep}. Of particular relevance to this paper is a snake model implementation that uses basis functions and that allows for tangent control, a useful property when the smoothness of the contours is nonuniform \cite{uhlmann2016hermite}.

The second approach to contour tracing is discrete contour extraction and subsequent curve fitting. In the first approach, the entire image was used to iteratively update the contour, whereas the second approach interpolates a continuous parametric curve from a list of coordinates. This can be achieved using spline curves, which is the method of choice in computer graphics \cite{pavlidis1983curve,pottmann2002approximation,wang2006fitting,zheng2012fast}. Another popular way to tackle this is through a regularized minimization problem, the regularization enforcing prior knowledge about the curve \cite{grossman1971parametric, plass1983curve}. The method presented in this work follows the latter paradigm  by enforcing a sparsity prior. Other more classical spline-based methods enforce sparsity by simply removing knots from an initially nonsparse curve \cite{lyche1987knot,lyche1987discrete,lyche1988data,eck1995knot, bingol2019geomdl, piegl1995fundamental}. We also mention other contour-tracing algorithms based on very different techniques \cite{goshtasby2000grouping, seo2016fast}, as well as recent deep-learning based ones that are applicable to 3D contour tracing \cite{ wang2020pie,scholz2021parameterization}; see \cite{guo2020deep} for a survey.

To attain our goal, we solve a bipartite optimization problem. On one hand, we want that the candidate curve fits the existing contour points exactly. This is achieved through a data-fidelity term. On the other hand, as an infinity of curves could satisfy this fit, we have to enforce prior knowledge into our model. This prior knowledge is introduced as a regularization cost coupled with a regularization operator, the result aiming at the enforcement of desired properties. First, it is likely that the true curve has few variations, which implies that the curve has a sparse representation. Second, it is frequent that variations happen over both the horizontal and the vertical axes simultaneously. Moreover, the recovered curve should be possibly denoised. Finally, the optimization cost should not depend on a rotation of the system of coordinates. We show in this paper that these specifications lead us to a regularization cost that consists of a mixed $\left(\text{TV-}\ell_2\right)$ norm. 

In order to sparsify given data, modern regularizers include structured sparsity \cite{bachStructuredSparsity,liu2015background}, namely \cite{JacobsHDTV}, low rank regularization \cite{LowRank}, or deep prior \cite{ulyanovDeepPrior}. However, these regularizers assume a discrete setting and thus do not yield a continuous curve as a solution. In addition, the deep prior regularizer does not provide an interpretable model. In this paper, we explore the continuous setting, as we aim at the recovery of a continuous 2D curve. Similarly to \cite{DebarreFGU19}, we explore generalized total-variation (TV) regularization for continuous-domain signal reconstruction using B-splines as basis functions for an exact discretization. Finally, we choose to represent the curve with hybrid splines, which give us the tools to represent curves with nonuniform smoothness. While \cite{DebarreAU19} addressed signal reconstruction using hybrid splines, this manuscript extends the setting for the handling of curves in 2D, which calls for a new regularizer. 

Our main contribution is threefold. Firstly, we introduce a continuous rotation-invariant TV (RI-TV) norm as a regularization for the recovery of curves. It effectively reconstructs sparse parametric curves from given contour points while being robust to noise. Secondly, we prove a representer theorem according to which there exists a curve with spline components that is a global minimizer of our optimization problem. Building upon this, we propose a curve construction using B-splines, which allows us to discretize the continuous-domain problem exactly with numerical efficiency. Finally, we present the combination of such RI-TV norm with a hybrid framework to generate stylized curves with nonuniform smoothness properties. 

The remainder of this paper is organized as follows: In Section \ref{sec:tv_based_optim}, we present the continuous-domain framework of the optimization problem and introduce our representer theorem. We then show the precise implementation and resolution of this task through the introduction of B-splines in Section \ref{sec:discrete_implementation}. In Section \ref{sec:hybrid_splines}, we extend the framework to hybrid splines. Finally, we experimentally verify properties of our contributions and show applications in Section \ref{sec:experiments}.

\section{Continuous-Domain Formulation}\label{sec:tv_based_optim}

Our goal is to recover a 2D parametric curve $\textbf{r}(t) = (x(t), y(t))$ that best fits a given ordered list of points $\textbf{p}[m] = (\mathrm{p}_{x}[m], \ \mathrm{p}_{y}[m]), \ m = 0, \ldots, M-1$. Contours being closed curves, we consider the coordinate functions $x(t)$ and $y(t)$ to be periodic in $t$. Since we have $M$ data locations, it is convenient to deal with $M$-periodic functions. We consequently set $t \in \mathbb{T}_M = [0, M]$. 

Concurrently, we want to control the sparsity of the fitted curve. This can be achieved by limiting the number of the singularities in the higher-order derivatives of its components. This effectively means that $\textbf{r}(t)$ admits a sparse representation. To that end, we introduce two new elements: a differential operator $\mathrm{L}$ and the RI-TV regularization functional.  

The mathematical foundation on which this paper is built is based on Schwartz' theory of distributions \cite{schwartz1957theorie}. Henceforth, let us denote the Schwartz' space of $M$-periodic smooth functions by $\mathcal{S}(\mathbb{T}_M)$. Its topological dual, $\mathcal{S}'(\mathbb{T}_M)$ is the space of tempered distributions over the torus.

\subsection{Derivative Operators and ${\rm L}$-Splines}\label{sec:L-splines}

The first element we introduce in our problem formulation is $\mathrm{L} = \mathrm{D}^{\alpha+1}$, the derivative operator whose order $(\alpha+1)$, with $\alpha \in \mathbb{N} \setminus \{0\}$, determines the smoothness of the components of the constructed curve.

Next, we define the periodic $\mathrm{L}$-splines with respect to the operator $\mathrm{L}$. A periodic $\mathrm{L}$-spline is a function $s : \mathbb{T}_M \rightarrow \R$ that verifies that 

\begin{equation} \label{eq:L_spline}
    \mathrm{L} \{ s \}(t) = \sum_{k = 0}^{K-1} a[k] \Sha_M(t - t_k),
\end{equation}
 
\noindent where  $\Sha_M(t)=\sum_{k\in\mathbb{Z}}\delta(t-M k )\in \mathcal{S}'(\mathbb{T}_M)$ is the $M$-periodic Dirac comb, $K\in \mathbb{N} \setminus\{0\}$ is the number of knots, $a[k] \in \R$ is the amplitude of the $k$th jump, and $t_k \in \R$ are distinct knot locations. 

\subsection{RI-TV Regularization}

The second element is $\mathcal{R}$, a sparsity-promoting regularization functional with key characteristics. Firstly, for 2D curves, the minimization of $\mathcal{R}(\mathrm{L}\{\textbf{r}\})$, where $\mathrm{L}\{\textbf{r}\} = (\mathrm{L}\{x\}, \mathrm{L}\{y\})$, should enforce sparsity jointly for the two components of $\textbf{r}$. Indeed, we want $\textbf{r}$ to have few variations, and they often should occur along both components simultaneously. Secondly, if the points $\textbf{p}[m]$ are rotated by an angle $\theta$, the fitted curve $\textbf{r}(t)$ should be rotated by the same angle $\theta$. To achieve this, our regularizer $\mathcal{R}$ should be invariant to a rotation of the system of coordinates, meaning that $\mathcal{R}(\mathbf{R}_{\theta} \mathrm{L}\{\textbf{r}\}) = \mathcal{R}(\mathrm{L}\{\textbf{r}\})$, where $\mathbf{R}_{\theta}$ is a rotation matrix.  Similarly, $\mathcal{R}$ should be equivariant to isotropic scaling, meaning that there exists a function $\mathrm{A}$ such that $\mathcal{R}(\mathrm{L}\{a\textbf{r}\}) = \mathrm{A}(a) \ \mathcal{R}(\mathrm{L}\{\textbf{r}\})$ for any $a \neq 0$. We now introduce the RI-TV norm, which consists in a mixed continuous $\left(\text{TV-}\ell_2\right)$ norm and satisfies our specifications. 

\begin{definition}\label{Def:Regularization}
Let $p\in [1,+\infty]$. The $\left(\text{TV-}\ell_p\right)$ norm of any vector-valued tempered distribution ${\bf w}=\begin{bmatrix}w_1 & w_2 \end{bmatrix} \in \mathcal{S}'(\mathbb{T})^2$ is defined as 
\begin{equation}\label{Eq:Reg}
 \|{\bf w}\|_{{\rm TV}-\ell_p} \overset{\Delta}{=} \sup_{\substack{\boldsymbol{\varphi}=(\varphi_1,\varphi_2)\in \mathcal{S}(\mathbb{T}_M)^2\\ \|\boldsymbol{\varphi}\|_{q,\infty} = 1}} \left( \langle  w_1,\varphi_1\rangle + \langle w_2, \varphi_2 \rangle\right),
\end{equation}
where $q\in[1,\infty]$ is the H\"older conjugate of $p$ with $\frac{1}{p}+\frac{1}{q}=1$ and $\|\cdot\|_{q,\infty}$ is the  $(\ell_q - L_\infty)$ mixed norm, defined for any $\boldsymbol{\varphi} \in \mathcal{S}(\mathbb{T}_M)^2$ as
\begin{equation}\label{Eq:L2LinfMixed}
\|\boldsymbol{\varphi}\|_{q,\infty} \overset{\Delta}{=} \sup_{t\in\mathbb{T}_M} \|\boldsymbol{\varphi}(t)\|_q.
\end{equation}
\end{definition} 

A mixed norm similar to \eqref{Eq:Reg} was previously introduced in \cite{FernandezGrandaSuperres} in the context of the recovery of Dirac distributions. In Theorem \ref{Thm:Intuition}, we compute the $\left(\text{TV-}\ell_p\right)$ norm for two general classes of functions or distributions. 

\begin{theorem}\label{Thm:Intuition}
\begin{enumerate}
\item \label{Item:L1} For any curve ${\bf f}=(f_1, f_2)$ with absolutely integrable components $f_i  \in L_1(\mathbb{T}_M)$, $i=1,2$, we have that  
\begin{equation}
     \norm{[f_1 \ \ f_2]}_{\mathrm{TV}-\ell_p} = \int_0^M \|{\bf f}(t)\|_p {\rm d} t.
\end{equation}

\item \label{Item:Dirac} Let ${\bf w}=(w_1,w_2)$ be a vector-valued distribution of the form ${\bf w}=  \sum_{k =1}^K {\bf a}[k] \Sha_M(\cdot - t_k)$ with ${\bf a}[k] \in \mathbb{R}^2, k=0,\ldots,K-1$. Then, we have that 

\begin{equation}
    \norm{[w_1 \ \ w_2]}_{\mathrm{TV}-\ell_p} = \sum_{k=0}^{K-1} \|{\bf a}[k]\|_p.
\end{equation}
\end{enumerate}
\end{theorem}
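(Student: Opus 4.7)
The plan is to prove each item by a classical duality argument: combine a pointwise Hölder inequality for the upper bound with the construction of a (nearly) extremal smooth test field $\boldsymbol{\varphi}$ for the lower bound.

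For item \ref{Item:L1}, the duality pairing reduces to $\langle w_1,\varphi_1\rangle+\langle w_2,\varphi_2\rangle=\int_0^M \langle {\bf f}(t),\boldsymbol{\varphi}(t)\rangle \, {\rm d}t$. Pointwise Hölder in $\mathbb{R}^2$ gives $|\langle {\bf f}(t),\boldsymbol{\varphi}(t)\rangle|\le \|{\bf f}(t)\|_p\|\boldsymbol{\varphi}(t)\|_q\le \|{\bf f}(t)\|_p\|\boldsymbol{\varphi}\|_{q,\infty}$, and integrating yields $\|[f_1\ f_2]\|_{{\rm TV}-\ell_p}\le \int_0^M \|{\bf f}(t)\|_p\,{\rm d}t$. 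For the reverse inequality I would exhibit an explicit optimizer. Define the measurable field $\boldsymbol{\psi}(t)$ on $\{\mathbf{f}\neq 0\}$ as the unit $\ell_q$-vector dual to $\mathbf{f}(t)$ (so that $\langle \mathbf{f}(t),\boldsymbol{\psi}(t)\rangle=\|\mathbf{f}(t)\|_p$ and $\|\boldsymbol{\psi}(t)\|_q=1$), and extend by $0$ elsewhere. Since $\boldsymbol{\psi}$ is merely measurable, I mollify it with a periodic smooth approximate identity to obtain $\boldsymbol{\varphi}_\varepsilon\in\mathcal{S}(\mathbb{T}_M)^2$, renormalize by $\max(1,\|\boldsymbol{\varphi}_\varepsilon\|_{q,\infty})$ to meet the constraint, and pass to the limit via dominated convergence (using $f_i\in L_1(\mathbb{T}_M)$).

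For item \ref{Item:Dirac}, the pairing collapses to a finite sum $\sum_{k=0}^{K-1}\langle {\bf a}[k],\boldsymbol{\varphi}(t_k)\rangle$, so pointwise Hölder at each knot gives $\big|\sum_k\langle {\bf a}[k],\boldsymbol{\varphi}(t_k)\rangle\big|\le \sum_k\|{\bf a}[k]\|_p\|\boldsymbol{\varphi}(t_k)\|_q\le \sum_k\|{\bf a}[k]\|_p$, proving the upper bound. For the matching lower bound I use that the $t_k$ are distinct modulo $M$: pick disjoint open neighborhoods $U_k\ni t_k$ and periodic bump functions $\eta_k\in\mathcal{S}(\mathbb{T}_M)$ with $\supp\eta_k\subset U_k$, $\eta_k(t_k)=1$, and $0\le \eta_k\le 1$. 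Letting $\mathbf{b}[k]\in\mathbb{R}^2$ be the unit $\ell_q$-vector dual to $\mathbf{a}[k]$, I set $\boldsymbol{\varphi}(t)=\sum_k \eta_k(t)\mathbf{b}[k]$; by disjointness of the supports $\|\boldsymbol{\varphi}(t)\|_q\le 1$ for every $t$, and $\boldsymbol{\varphi}(t_k)=\mathbf{b}[k]$, so the pairing attains $\sum_k\|\mathbf{a}[k]\|_p$ exactly.

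The main technical obstacle is in item \ref{Item:L1}: the extremal field $\boldsymbol{\psi}$ is only measurable and typically discontinuous (it flips direction wherever $\mathbf{f}$ does, and is undefined on the zero set), whereas the supremum is restricted to smooth test functions. The renormalization after mollification must be handled carefully so that the constraint $\|\boldsymbol{\varphi}_\varepsilon\|_{q,\infty}=1$ is met without losing more than a vanishing fraction of the pairing; this is where the hypothesis $f_i\in L_1(\mathbb{T}_M)$ is crucial, through dominated convergence with dominant $2\|\mathbf{f}(t)\|_p$. Item \ref{Item:Dirac} is comparatively elementary once the bump construction is in place.
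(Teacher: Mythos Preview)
Your proposal is correct and follows the same overall strategy as the paper: a pointwise H\"older inequality for the upper bound, and an explicit (near-)extremal test field for the lower bound. The only notable difference is in the approximation step of Item~\ref{Item:L1}: you smooth the measurable dual field $\boldsymbol{\psi}$ by convolution with a periodic approximate identity and invoke dominated convergence, whereas the paper appeals to Lusin's theorem to first approximate each $\psi_i$ by a continuous function (controlling $\int_E|f_i|$ on the exceptional set $E$) and then uses the density of $\mathcal{S}(\mathbb{T}_M)$ in $\mathcal{C}(\mathbb{T}_M)$. Your route is arguably more elementary and has the pleasant side effect that convolution against a nonnegative mollifier automatically preserves the constraint $\|\boldsymbol{\varphi}_\varepsilon(t)\|_q\le 1$ by convexity of $\|\cdot\|_q$, so the renormalization you mention is in fact not needed. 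For Item~\ref{Item:Dirac} your bump-function construction is simply an explicit realization of the interpolating smooth curve whose existence the paper asserts without details.
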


The proof of Theorem \ref{Thm:Intuition} can be found in Appendix \ref{sec:appendix_norm_equivalence}. The outer TV norm promotes sparsity, as it is the continuous counterpart of the $\ell_1$ norm \cite{DebarreFGU19}. The inner $\ell_p$ norm in Item \ref{Item:L1} induces a coupling of the $f_1$ and $f_2$ components. Indeed, it first aggregates the $f_1$ and $f_2$ curve components, which the outer TV norm then jointly sparsifies. This is true of any $\ell_p$ norm for $p \neq 1$. For $p=1$, the components are no longer coupled due to the separability of the norm. For $p=2$ and any curve $\textbf{f}=(f_1, f_2)$, we set 

\begin{equation}\label{eq:tvl2_R}
    \mathcal{R}(\textbf{f}) =  \norm{[f_1 \ \ f_2]}_{\mathrm{TV}-\ell_2}.
\end{equation}
 
\begin{proposition}\label{Prop:Invariance_L2}
The $\left(\text{TV-}\ell_2\right)$ norm, notated $\mathcal{R}$, is invariant to rotation, in the sense that $\mathcal{R}(\mathbf{R}_{\theta} \mathbf{f}) =  \mathcal{R}(\mathbf{f})$, where $\mathbf{R}_{\theta}$ is a rotation matrix. Furthermore, the $\left(\text{TV-}\ell_2\right)$ norm is the only $\left(\text{TV-}\ell_p\right)$ norm that is rotation invariant.
\end{proposition}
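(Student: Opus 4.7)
My plan is to prove the two assertions separately, both starting from the dual definition in \eqref{Eq:Reg} (together with Theorem~\ref{Thm:Intuition} for the converse).

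For the forward direction (rotation invariance when $p=2$), I would plug $\mathbf{R}_\theta \mathbf{w}$ into \eqref{Eq:Reg} and transfer the action of $\mathbf{R}_\theta$ onto the test functions. Writing out the two components, a short computation shows that
\[
\langle (\mathbf{R}_\theta \mathbf{w})_1,\varphi_1\rangle + \langle (\mathbf{R}_\theta \mathbf{w})_2,\varphi_2\rangle = \langle w_1,(\mathbf{R}_\theta^\top \boldsymbol{\varphi})_1\rangle + \langle w_2,(\mathbf{R}_\theta^\top \boldsymbol{\varphi})_2\rangle,
\]
where $\mathbf{R}_\theta^\top \boldsymbol{\varphi}$ is understood pointwise in $t$. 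Since $q=2$ when $p=2$, the Euclidean norm is pointwise invariant under the orthogonal matrix $\mathbf{R}_\theta^\top$, so $\|\mathbf{R}_\theta^\top \boldsymbol{\varphi}(t)\|_2 = \|\boldsymbol{\varphi}(t)\|_2$ for every $t$, and therefore $\|\mathbf{R}_\theta^\top \boldsymbol{\varphi}\|_{2,\infty} = \|\boldsymbol{\varphi}\|_{2,\infty}$. The map $\boldsymbol{\varphi} \mapsto \mathbf{R}_\theta^\top \boldsymbol{\varphi}$ is a linear bijection of $\mathcal{S}(\mathbb{T}_M)^2$ that preserves the unit sphere of $\|\cdot\|_{2,\infty}$; taking the supremum thus yields $\mathcal{R}(\mathbf{R}_\theta \mathbf{w}) = \mathcal{R}(\mathbf{w})$.

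For the converse (only $p=2$ works), I would reduce rotation invariance of the $(\mathrm{TV}-\ell_p)$ norm to rotation invariance of the plain $\ell_p$ norm on $\mathbb{R}^2$ using Item~\ref{Item:Dirac} of Theorem~\ref{Thm:Intuition}. Pick any nonzero $\mathbf{a} \in \mathbb{R}^2$, fix a knot $t_0 \in \mathbb{T}_M$, and set $\mathbf{w} = \mathbf{a}\,\Sha_M(\cdot - t_0)$. Then $\mathbf{R}_\theta \mathbf{w} = (\mathbf{R}_\theta \mathbf{a})\,\Sha_M(\cdot - t_0)$, so the theorem gives $\|\mathbf{w}\|_{\mathrm{TV}-\ell_p} = \|\mathbf{a}\|_p$ and $\|\mathbf{R}_\theta \mathbf{w}\|_{\mathrm{TV}-\ell_p} = \|\mathbf{R}_\theta \mathbf{a}\|_p$. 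Rotation invariance of the $(\mathrm{TV}-\ell_p)$ norm thus forces $\|\mathbf{R}_\theta \mathbf{a}\|_p = \|\mathbf{a}\|_p$ for all $\mathbf{a}\in\mathbb{R}^2$ and all $\theta$. Taking $\mathbf{a}=(1,0)$ and $\theta=\pi/4$ gives $\mathbf{R}_{\pi/4}\mathbf{a} = (1/\sqrt{2},1/\sqrt{2})$, hence the equation $2^{1/p-1/2} = 1$, which has the unique solution $p=2$.

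The argument is essentially bookkeeping; the only mild subtlety is the first step, namely correctly transferring $\mathbf{R}_\theta$ from $\mathbf{w}$ to $\boldsymbol{\varphi}$ (it becomes $\mathbf{R}_\theta^\top$) and checking that the admissible set of test functions is preserved. Once that is in place, the crucial fact is that the $\ell_2$ norm is the only member of the $\ell_p$ family on $\mathbb{R}^2$ that is invariant under all rotations, which I would isolate as the single computation above rather than quote as a black box.
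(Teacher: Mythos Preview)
Your proof is correct and follows essentially the same route as the paper: transfer the rotation to the test functions via the change of variable $\boldsymbol{\varphi}\mapsto \mathbf{R}_\theta^\top\boldsymbol{\varphi}$ (the paper writes $\mathbf{R}_{-\theta}$, which is the same thing), use that this is a bijection of $\mathcal{S}(\mathbb{T}_M)^2$ preserving $\|\cdot\|_{2,\infty}$, and for the converse reduce to the scalar identity $2^{1/p-1/2}=1$. The only cosmetic difference is that for the counterexample the paper invokes Item~\ref{Item:L1} of Theorem~\ref{Thm:Intuition} with the constant curve $(f_1,f_2)=(1,0)$, whereas you invoke Item~\ref{Item:Dirac} with a single Dirac comb; both choices collapse to the identical computation on $\mathbb{R}^2$.
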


We provide in Appendix \ref{sec:appendix_invariance} the proof of Proposition \ref{Prop:Invariance_L2}. Finally, $\mathcal{R}$ being a norm, it is scale equivariant (homogeneity property).

\subsection{Continuous-Domain Optimization Problem}

The setting we described in this section is typical of a minimization problem with two terms. The first term---the data-fidelity term---ensures that the candidate curve $\textbf{r}(t)$ is close to the points $\textbf{p}[m]$. The second term, called regularization, introduces our \emph{a priori} desiderata for the reconstructed curve. The importance of these two terms is weighted by a parameter $\lambda > 0$. The solution set of the minimization problem is

\begin{equation}\label{eq:continuous_problem}
    \mathcal{V} = \argmin_{\textbf{r} \in \mathcal{X}_{\mathrm{L}}(\mathbb{T}_M)} \left ( \sum_{m=0}^{M-1} \norm{\left.\textbf{r}(t)\right\vert_{t=m} - \textbf{p}[m]}_2^2
    + 
    \lambda \mathcal{R}(\mathrm{L}\{\textbf{r}\}) \right),
\end{equation}

\noindent where the search space $\mathcal{X}_{\rm L}$ is defined as 
\begin{align}\label{Eq:DefNative}
\mathcal{X}_{\rm L}(\mathbb{T}_M)  = \{ &{\bf r}\in  \mathcal{S}'(\mathbb{T}_M)^2: \mathcal{R}({\rm L }\{\bf r\}) <+\infty\}.
\end{align}
  
 The data-fidelity term in \eqref{eq:continuous_problem} penalizes the Euclidean distance between the sample $\left.\textbf{r}(t)\right\vert_{t=m}$ of the curve and the point $\textbf{p}[m]$ for every $m=0, \ldots, M-1$. The fact that $\textbf{r}$ is sampled uniformly along the parameter axis encourages the reconstructed curve to be parametrized by its curvilinear abcissa, promoting the arc length of $\textbf{r}(t)$ to be a linear function of the parameter $t$. The underlying assumption behind this statement is that the points $\textbf{p}[m]$ are spread approximately uniformly along the curve. This is an important assumption, since the regularization term in \eqref{eq:continuous_problem} involves the derivatives of $\textbf{r}(t)$ and thus heavily depends on the choice of parametrization. In that respect, the curvilinear abcissa is a desirable choice. In practice, it often results in rough curves being penalized heavily by our regularization, which other parametrizations may fail to achieve \cite{JacobBU04}. 
 
Our representer theorem (Theorem \ref{Thm:Rep}) specifies the form of the solution of \eqref{eq:continuous_problem}. The proof of this theorem is provided in Appendix \ref{sec:appendix_rep_theorem}.

\begin{theorem}\label{Thm:Rep}
 The  global minimizer of \eqref{eq:continuous_problem} can be achieved by a periodic ${\rm L}$-spline curve ${\bf r}^*$ with at most $K\leq 2M$ knots. Indeed, we have that 
\begin{equation}\label{eq:extreme_point}
    {\rm L}\{{\bf r}^*\} = \sum_{k=0}^{K-1} {\bf a}_k \Sha_M(\cdot -t_k)
\end{equation}
for some  distinct knot locations   $t_k\in\mathbb{T}_M$ and amplitude vectors ${\bf a}_k \in \mathbb{R}^2$.
\end{theorem}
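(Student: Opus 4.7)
The plan is to recast \eqref{eq:continuous_problem} as a convex optimization over the space of vector-valued Radon measures, and then to apply an abstract extreme-point argument.

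First, I would reformulate the problem using the substitution $\mathbf{w} = \mathrm{L}\{\mathbf{r}\}$. Since $\mathrm{L} = \mathrm{D}^{\alpha+1}$ acts on $M$-periodic distributions, its null space $\mathcal{N}_{\mathrm{L}}$ consists of the constants; for vector-valued functions, it has dimension $2$. Picking a stable right inverse $\mathrm{L}^{-1}$ (a periodized Green's function), every $\mathbf{r} \in \mathcal{X}_{\mathrm{L}}(\mathbb{T}_M)$ decomposes uniquely as $\mathbf{r} = \mathrm{L}^{-1}\mathbf{w} + \mathbf{c}$ with $\mathbf{c} \in \mathbb{R}^2$ and $\mathbf{w} \in \mathcal{M}(\mathbb{T}_M)^2$, the latter because $\mathcal{R}(\mathbf{w}) < +\infty$ is equivalent (by Definition~\ref{Def:Regularization} and Theorem~\ref{Thm:Intuition}) to $\mathbf{w}$ being a finite vector-valued Radon measure on the torus. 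Problem \eqref{eq:continuous_problem} thus rewrites as
\begin{equation*}
\min_{(\mathbf{w},\mathbf{c}) \in \mathcal{M}(\mathbb{T}_M)^2 \times \mathbb{R}^2} \ \sum_{m=0}^{M-1} \|\mathbf{H}_m \mathbf{w} + \mathbf{c} - \mathbf{p}[m]\|_2^2 + \lambda \mathcal{R}(\mathbf{w}),
\end{equation*}
where $\mathbf{H}_m : \mathcal{M}(\mathbb{T}_M)^2 \to \mathbb{R}^2$ are weak-$\ast$ continuous sampling functionals, so the data term depends on $\mathbf{w}$ through only $2M$ scalar measurements.

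Second, I would establish existence of minimizers by the direct method. Coercivity in $\mathbf{c}$ comes from the data fidelity, while $\mathcal{R}$ is coercive and weak-$\ast$ lower semicontinuous on $\mathcal{M}(\mathbb{T}_M)^2$ (its unit ball is weak-$\ast$ compact by Banach--Alaoglu, via the predual $\mathcal{C}(\mathbb{T}_M;\mathbb{R}^2)$ equipped with the $(\ell_2\text{-}L_\infty)$ norm of \eqref{Eq:L2LinfMixed}). Any minimizing sequence therefore admits a weak-$\ast$ convergent subsequence whose limit is again admissible, so $\mathcal{V}$ is nonempty. By convexity of both terms, $\mathcal{V}$ is convex and weak-$\ast$ compact.

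Third, I would invoke an abstract representer theorem. By Krein--Milman, $\mathcal{V}$ admits an extreme point $(\mathbf{w}^\ast,\mathbf{c}^\ast)$. The extreme points of the unit ball of $\mathcal{R}$ are the normalized vector Dirac masses $\mathbf{u}\,\Sha_M(\cdot - t_0)$ with $t_0 \in \mathbb{T}_M$ and $\|\mathbf{u}\|_2 = 1$ (a consequence of Theorem~\ref{Thm:Intuition}, item~\ref{Item:Dirac}, combined with the vectorial Fisher--Jerome-type characterization used, e.g., in the gTV representer-theorem literature). Because $\mathbf{w}^\ast$ must be extremal within the affine set cut out by the $2M$ scalar measurements it realizes, a Carath\'eodory-type argument gives that it is a finite conic combination of at most $2M$ such atoms, i.e.
\begin{equation*}
\mathrm{L}\{\mathbf{r}^\ast\} = \mathbf{w}^\ast = \sum_{k=0}^{K-1} \mathbf{a}_k \Sha_M(\cdot - t_k), \qquad K \leq 2M,
\end{equation*}
with distinct knots $t_k \in \mathbb{T}_M$ and amplitudes $\mathbf{a}_k \in \mathbb{R}^2$. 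Reverting the change of variables produces the claimed $\mathrm{L}$-spline curve $\mathbf{r}^\ast = \mathrm{L}^{-1}\mathbf{w}^\ast + \mathbf{c}^\ast$.

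The main obstacle is the extreme-point characterization of the TV-$\ell_2$ unit ball. Joint sparsity across the two coordinate functions requires the extremal atoms to be \emph{vector-valued} Diracs with aligned support rather than independent scalar Diracs per channel, which would only yield the weaker bound $K \leq 4M$. Obtaining the sharp count $K \leq 2M$ genuinely relies on the mixed structure of $\mathcal{R}$ and on a vector-valued extension of the classical representer theorem; this is where a careful invocation (or direct re-derivation via perturbation within $\mathcal{V}$) of the extreme-point lemma for vector Radon measures is indispensable.
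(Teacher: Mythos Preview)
Your proposal is correct and follows essentially the same route as the paper: reparametrize via the isomorphism $\mathbf{r}\leftrightarrow(\mathrm{L}\{\mathbf{r}\},\text{constant})$ onto $\mathcal{M}_0(\mathbb{T}_M)^2\times\mathbb{R}^2$, establish existence and weak-$\ast$ compactness of the solution set, then apply an abstract representer theorem (the paper cites Unser--Aziznejad rather than arguing Krein--Milman plus Carath\'eodory directly) together with the extreme-point characterization of the RI-TV unit ball. Two small points worth noting: the paper is explicit that $\mathbf{w}=\mathrm{L}\{\mathbf{r}\}$ must land in the \emph{zero-mean} measures $\mathcal{M}_0(\mathbb{T}_M)^2$ for the periodic pseudoinverse to make sense, a constraint you elide; and the extreme-point lemma you flag as the main obstacle is not merely cited but proved from scratch in the paper (their Proposition~\ref{Prop:EP_RITV}) via a support-splitting argument together with the strict convexity of~$\ell_2$, rather than being deduced from Theorem~\ref{Thm:Intuition}.
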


Theorem \ref{Thm:Rep} states that the solution set \eqref{eq:continuous_problem} contains periodic $\mathrm{L}$-splines. Even though our work uses a mixed $\left(\text{TV-}\ell_2\right)$ norm as regularization, this result is reminiscent of \cite{FisherJerome1975, UnserFW2017, fageot2020tv}, which proves that inverse problems with TV regularization have spline solutions also for the recovery of signals that are periodic, with a period of integer length. 

\section{Exact Discretization} \label{sec:discrete_implementation}
 
\subsection{Polynomial B-Splines} \label{sec:b_splines}

Theorem \ref{Thm:Rep} motivates our discretization of the continuous-domain problem in \eqref{eq:continuous_problem} over the space of periodic cardinal $\mathrm{L}$-splines, \emph{i.e.} with integer knot spacing $(t_{k+1} - t_k) = 1$ in \eqref{eq:L_spline}. In the case where $\mathrm{L}$ is the multiple-order derivative operator $\mathrm{D}^{\alpha+1}$, $\mathrm{L}$-splines are piecewise polynomials of degree $\alpha$. Following \cite{UnserSplinesPerfectFit, de1978practical}, we consider symmetric $(\alpha+1)$-order B-splines $\beta^\alpha$. B-splines are the shortest functions within the space of cardinal periodic $\mathrm{L}$-splines, with a support included in $[-\frac{\alpha+1}{2}, \frac{\alpha+1}{2}]$ \cite{Deboor1972, Ron1990}. This property is particularly advantageous for numerical efficiency. We show in Figure \ref{fig:splines} symmetric polynomial B-splines for the first $\alpha$ degrees.\\

\begin{figure}
    \centering
    \includegraphics[width=0.4\textwidth]{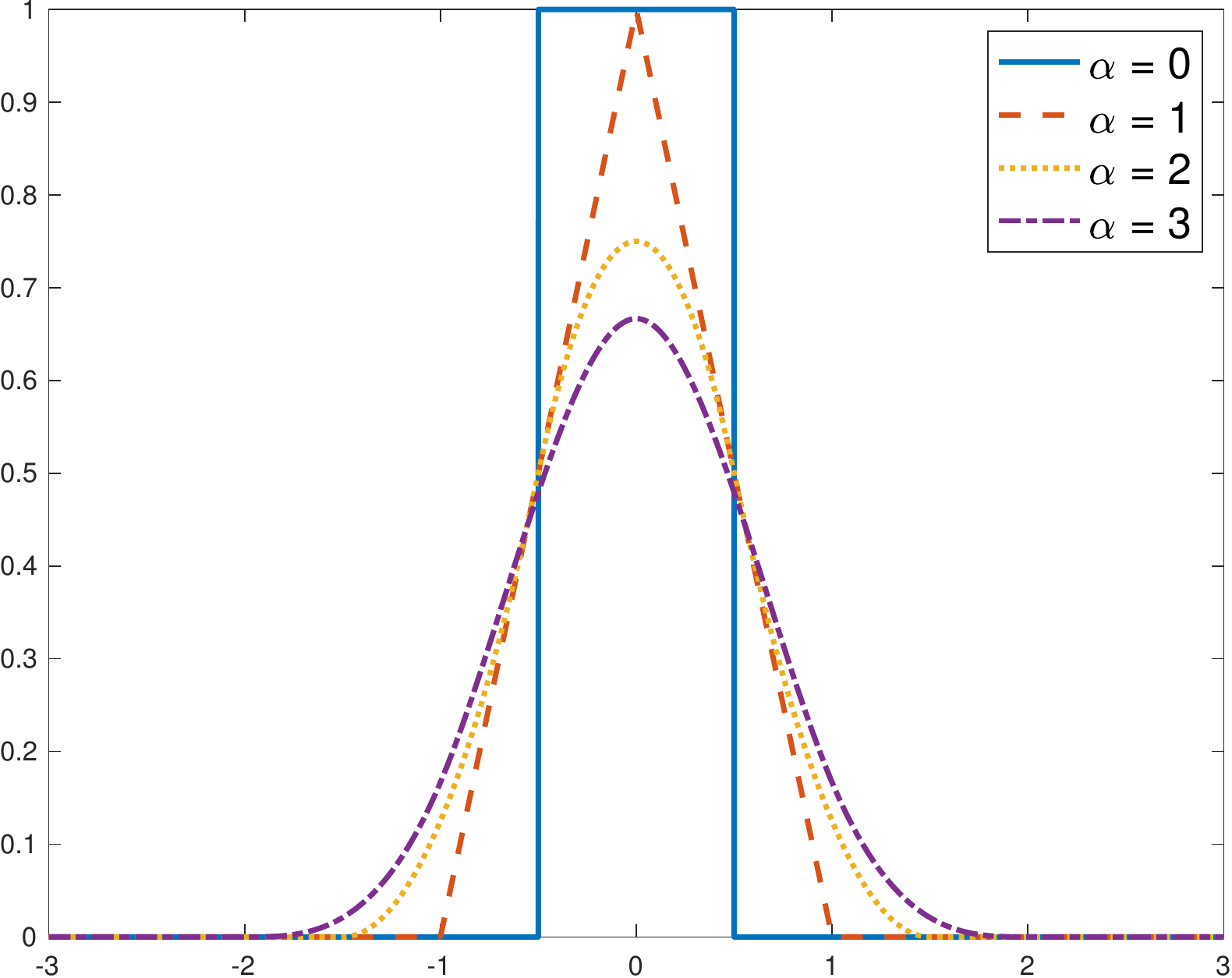}
    \caption{Symmetric polynomial B-splines of degree $\alpha = 0, \ldots , 3$.}
    \label{fig:splines}
\end{figure}

For cardinal splines, we use grid points $t_k = k - \frac{\alpha+1}{2}$ for $k \in \mathbb{Z}$. For an odd $\alpha$, we then have an integer grid, while an even $\alpha$ gives a half-integer grid. Additionally, the jump amplitudes $a[k]$ in \eqref{eq:L_spline} for $s = \beta^{\alpha}$ are denoted by $d_{\alpha}[k]$, a FIR digital filter corresponding to the finite difference of order $(\alpha+1)$ with knot locations at $t_k$ for $k \in [0 \ldots \alpha+1]$. We provide in Table \ref{tab:splines} a summary of the relevant characteristics of B-splines for small values of $\alpha$.

\begin{table*}[!t]
\centering
\caption{Characteristics of $\alpha^{\text{th}}$-degree B-splines.}
\label{tab:splines}
\begin{tabular}{l|l|l}
\hline
\hline
\multirow{4}{*}
\textrm{L}
& $\beta^{\alpha}(t)$ & 
\begin{math}
    d_{\alpha}[k], \ k = 0, \ldots, \alpha+1
\end{math}  \\
\hline
\begin{math}
\mathrm{D}^1
\end{math} &
$
\begin{aligned}
\beta^{0}(t) =
\begin{cases} 
1, & 0 \leq |t| < 1/2 \\
0, &\text{otherwise}
\end{cases} 
\end{aligned}$
& $(1, -1)$ \\

\begin{math}
\mathrm{D}^2 
\end{math} & 
$
\begin{aligned}
\beta^{1}(t) =
\begin{cases} 
1 - |t|, & 0 \leq |t| < 1\\
0, &\text{otherwise}
\end{cases}
\end{aligned}$
& 
$(1, -2, 1)$ \\

\begin{math}
\mathrm{D}^3
\end{math}
& 
$
\begin{aligned}
\beta^{2}(t)=
\begin{cases} 
3/4 - t^2, & |t| < 1/2\\
(3/2 - |t|)^2/2, & 1/2 \leq |t| < 3/2 \\
0, &\text{otherwise}
\end{cases}
\end{aligned}$
& $(1, -3, 3, -1) $\\

\begin{math}
\mathrm{D}^4 
\end{math}
& 
$
\begin{aligned}
\beta^{3}(t)=
\begin{cases} 
2/3 - |t|^2 + |t|^3/2, & |t| < 1\\
(2 - |t|)^3/6, & 1 \leq |t| < 2 \\
0, &\text{otherwise}
\end{cases}
\end{aligned}$
& $(1, -4, 6, -4, 1)$
\\
\hline
\hline
\end{tabular}
\end{table*}

\subsection{Discrete Formulation}

As suggested by Theorem \ref{Thm:Rep}, we take the stance of recasting the continuous-domain problem in \eqref{eq:continuous_problem} as a finite-dimensional optimization problem by restricting the search space to periodic L-splines with knots on a uniform grid. This allows us to effectively reduce the complexity of our algorithmic resolution. To do so, we describe our closed curves $\textbf{r}(t)$ as linear combinations of $N$ shifts of the $M$-periodized basis function $\varphi_{M} (t) = \sum_{k \in \mathbb{Z}} \varphi(\frac{t - Mk}{h})$, where $h = \frac{M}{N}$ is the grid stepsize. Following Section \ref{sec:b_splines}, we choose $\varphi_{M}$ to be the $M$-periodization and $h$-dilation of the B-spline generator $\varphi = \beta^{\alpha}$. These basis functions are weighted by two vectors of coefficients $\textbf{c}_x = (\mathrm{c}_x[n])^{N-1}_{n=0}$ and $\textbf{c}_y = (\mathrm{c}_y[n])^{N-1}_{n=0}$. Finally, the weighted functions are shifted by multiples of the grid size $h$ in order to describe

\begin{equation}\label{eq:curve_form}
    \textbf{r}(t) = \left[
    \begin{array}{cc}
        x(t)\\
        y(t)
    \end{array}
    \right]
    = 
    \left[
    \begin{array}{cc}
        \sum_{n = 0}^{N-1} \mathrm{c}_x[n] \ \varphi_{M}(t - nh) \\
        \sum_{n = 0}^{N-1} \mathrm{c}_y[n] \ \varphi_{M}(t - nh)
    \end{array}
\right].
\end{equation}

\subsection{Discrete Implementation}

Our choice \eqref{eq:curve_form} of curve parametrization allows us to optimize solely on the coefficients $\textbf{c}_x, \textbf{c}_y \in \R^{N}$ of two curve components. We implement a system matrix that interpolates the coefficients using $\varphi_{M}$ and samples them at the measurement locations. We introduce $\textbf{H} \in \R^{M \times N}$ with

\begin{equation} \label{eq:system_matrix}
    \left[\textbf{H}\right]_{m, n} = \varphi_{M}\left (m - nh \right).
\end{equation}

The regularization operator $\mathrm{L}$ becomes a circulant regularization matrix $\textbf{L} \in \R^{N \times N}$ composed of shifted versions of the $\alpha$th-order derivative operator coefficients $d_{\alpha}$ (see Table \ref{tab:splines}). The regularization matrix $\textbf{L}$ is therefore constructed as

\begin{equation}\label{eq:reg_matrix}
    \left[\textbf{L}\right]_{m, n} = \frac{1}{h^{\alpha}} d_{\alpha}[\left(m-n\right)_{\bmod N}].
\end{equation}

Our mixed-norm regularization involves, in the discrete setting, an $\ell_1-\ell_2$ norm given by

\begin{equation}
    \norm{[\textbf{f}_1 \ \ \textbf{f}_2]}_{\ell_1-\ell_2} \overset{\Delta}{=} \sum_{n = 0}^{N-1} \sqrt{(\textbf{f}_1[n])^2 + (\textbf{f}_2[n])^2},
\end{equation}

\noindent for $\textbf{f}_1, \textbf{f}_2 \in \R^N$. Indeed, we have that (see Theorem \ref{Thm:Intuition} in Appendix \ref{sec:appendix_norm_equivalence}): 

\begin{equation}\label{eq:norm_equivalence}
\norm{[\mathrm{L}\{x\} \ \mathrm{L}\{y\}]}_{\mathrm{TV}-\ell_2} = 
\norm{\textbf{L} \left[ \textbf{c}_x \ \textbf{c}_y \right] }_{\ell_1-\ell_2}. 
\end{equation}

Our discrete optimization problem therefore aims at finding $\textbf{c}_x$ and $\textbf{c}_y$ such that

\begin{equation}\label{eq:simple_min_problem}
    \argmin_{\textbf{c}_x, \textbf{c}_y \in \R^N} \norm{
        \begin{bmatrix}
            \textbf{H} & \textbf{0} \\
            \textbf{0} & \textbf{H}
        \end{bmatrix}
        \left[
        \begin{array}{cc}
            \textbf{c}_x \\
            \textbf{c}_y
        \end{array}
        \right]
        - 
        \left[
        \begin{array}{cc}
            \textbf{p}_x \\
            \textbf{p}_y
        \end{array}
        \right]
    }_2^2
    + 
    \lambda \norm{
        \textbf{L} \left[ \textbf{c}_x \ \textbf{c}_y \right]
    }_{\ell_1-\ell_2}.
\end{equation}

\subsection{Algorithmic Resolution}

To solve Problem \eqref{eq:simple_min_problem}, we use the alternating-direction method of multipliers (ADMM) solver \cite{BoydADMM} as implemented in the GlobalBioIm Matlab library \cite{GlobalBioIm} dedicated to the solution of inverse problems.

\section{Extension to Hybrid B-Spline Contours} \label{sec:hybrid_splines}

\subsection{Motivation and Continuous Model}

In Section \ref{sec:tv_based_optim}, we have presented a model and its implementation that were allowing the creation of a sparse curve using splines as basis functions. However, not all curves can be faithfully represented with a single type of spline. We propose to cater to this by modeling our closed function as a sum of two components $\textbf{r}(t) = \textbf{r}_1(t) + \textbf{r}_2(t)$. Similarly to the non-hybrid setting, we have $M$ points $\textbf{p}[m] = (\mathrm{p}_{x}[m], \ \mathrm{p}_{y}[m]), \ m = 0, \dots, M-1$. Hence, we again have that $\textbf{r}$ is $M$-periodic with $t \in \mathbb{T}_M$. Following the formulation for one-dimensional signals in \cite{DebarreAU19} and extending it to two dimensions, we consider continuous problems of the form

\begin{eqnarray}\label{eq:continuous_hybrid_problem}
    \nonumber \mathcal{V}_{\rm hyb}= && \argmin_{\substack{\textbf{r}_i \in  \mathcal{X}_{\mathrm{L}_i}(\mathbb{T}_M)  \\ \textbf{r}_1(0) = \textbf{0}}} 
    \sum_{m=0}^{M-1} \norm{\left.\textbf{r}_1(t)\right\vert_{t=m} + \left.\textbf{r}_2(t)\right\vert_{t=m} - \textbf{p}[m]}_2^2 \\
    && \mbox{}+ \lambda_1  \norm{\mathrm{L}_1\{\textbf{r}_1\}}_{\mathrm{TV}-\ell_2} + \lambda_2 \norm{\mathrm{L}_2\{\textbf{r}_2\}}_{\mathrm{TV}-\ell_2},
\end{eqnarray}

\noindent where $\lambda_1, \ \lambda_2 > 0$ are the two regularization parameters weighting of two regularization terms, and $\mathrm{L}_1$ and $\mathrm{L}_2$ are derivative operators of independent order. We now prove a representer theorem that suggests a parametric form for the optimal solution of Problem \ref{eq:continuous_hybrid_problem}. Its proof can be found in Appendix \ref{sec:app_thmhyb}.  \\

  \begin{theorem}\label{Thm:Hybrid}
  There exists a global minimizer ${\bf r}^*$ of \eqref{eq:continuous_hybrid_problem} that can be decomposed as ${\bf r}^* = {\bf r}^*_1 + {\bf r}^*_2$, where ${\bf r}^*_i$ are periodic ${\rm L}_i$-splines (see \eqref{eq:L_spline}) with $K_i$ knots, with $i=1,2$. Moreover, we have the bound $K_1+K_2\leq 2M$ for the  total number of knots of ${\bf r}^*$. 
\end{theorem}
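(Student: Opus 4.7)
The plan is to adapt the extreme-point argument underlying Theorem \ref{Thm:Rep} to the product search space $\mathcal{X}_{\mathrm{L}_1}(\mathbb{T}_M) \times \mathcal{X}_{\mathrm{L}_2}(\mathbb{T}_M)$. First, I would establish existence of a minimizer by the direct method of the calculus of variations: the data-fidelity term is continuous and convex in the sampled sum $(\mathbf{r}_1+\mathbf{r}_2)(m)$, while each regularizer $\|\mathrm{L}_i\{\mathbf{r}_i\}\|_{\mathrm{TV}-\ell_2}$ is weak*-lower semicontinuous on $\mathcal{X}_{\mathrm{L}_i}(\mathbb{T}_M)$. Coercivity follows because each regularizer controls $\mathbf{r}_i$ modulo the finite-dimensional null space of $\mathrm{L}_i$ (periodic polynomials of degree $\leq \alpha_i$); the $M$ sampling constraints together with the pin $\mathbf{r}_1(0)=\mathbf{0}$ suffice to bound the null-space components, so every minimizing sequence admits a weak*-convergent subsequence whose limit lies in $\mathcal{V}_{\mathrm{hyb}}$.

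Next, I would show that $\mathcal{V}_{\mathrm{hyb}}$ is convex, weak*-compact, and that on $\mathcal{V}_{\mathrm{hyb}}$ both the sample vector $(\mathbf{r}_1+\mathbf{r}_2)(m)$ and each individual regularizer value are constant. Constancy of the fit follows from strict convexity of the quadratic data term in the fit value; constancy of each $\|\mathrm{L}_i\{\mathbf{r}_i\}\|_{\mathrm{TV}-\ell_2}$ then follows from a standard convex-splitting argument, since otherwise the midpoint of two solutions with different individual regularizer values would achieve strictly smaller total cost. By the Krein-Milman theorem, $\mathcal{V}_{\mathrm{hyb}}$ admits an extreme point $(\mathbf{r}_1^*,\mathbf{r}_2^*)$, which must also be an extreme point of the intersection of the two $(\mathrm{TV}-\ell_2)$ sub-level sets with the affine slice fixing the $M$ sum samples.

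The key step, and the main obstacle, is the extreme-point characterization for the vector-valued $(\mathrm{TV}-\ell_2)$ regularizer under finite-dimensional linear constraints. Building on Item \ref{Item:Dirac} of Theorem \ref{Thm:Intuition}, I would invoke (or re-derive in this periodic vector-valued setting) the joint-sparsity representer result of \cite{UnserFW2017, fageot2020tv}: extreme points of a $(\mathrm{TV}-\ell_2)$-ball intersected with a finite-dimensional affine constraint are finite sums $\sum_k \mathbf{a}_k \Sha_M(\cdot - t_k)$ with $\mathbf{a}_k \in \mathbb{R}^2$, which by \eqref{eq:L_spline} forces each $\mathbf{r}_i^*$ to be a periodic $\mathrm{L}_i$-spline with $K_i$ knots. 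To get $K_1+K_2 \leq 2M$, I would count degrees of freedom against the affine constraint: the sampled-sum map has image of dimension at most $2M$ (the $M$ points in $\mathbb{R}^2$), while each knot contributes two scalar parameters through its amplitude $\mathbf{a}_k\in\mathbb{R}^2$, so an extreme point of the joint problem must be supported on at most $2M$ total degrees of freedom shared between the two components. This counting mirrors the 1-D hybrid argument of \cite{DebarreAU19}; the main technical work is the vector-valued refinement, since the $(\mathrm{TV}-\ell_2)$ norm does not decouple across the two spatial coordinates and the two regularizers are coupled only through the shared data term.
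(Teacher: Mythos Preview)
Your overall strategy—existence via the direct method, compactness, Krein--Milman, and then a representer-theorem characterization of the extreme points—matches the paper's route. Two steps in your execution are problematic, though.

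First, the constancy claim is false in general, and your justification does not work: if two solutions share the same fit (hence the same total regularization $\lambda_1 J_1+\lambda_2 J_2$) but have different individual splits $(J_1,J_2)$, their midpoint has total regularization at most that same value by convexity of each norm, with no strict inequality. So the ``convex-splitting argument'' does not force the two regularizer values to be separately constant on $\mathcal{V}_{\rm hyb}$. This claim is also unnecessary: one can check directly that an extreme point of $\mathcal{V}_{\rm hyb}$ is extreme in the intersection of the affine measurement slice with the two sub-level sets taken at \emph{its own} values $J_1^*,J_2^*$, without any global constancy.

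Second, and more importantly, your degree-of-freedom count for $K_1+K_2\le 2M$ is muddled: having just said each knot carries a two-dimensional amplitude, the bookkeeping ``$2M$ constraints versus two parameters per knot'' does not transparently yield $K_1+K_2\le 2M$ rather than, say, $M$. The paper sidesteps this by a single clean observation: it treats $\lambda_1\mathcal{R}(\mathbf{w}_1)+\lambda_2\mathcal{R}(\mathbf{w}_2)$ as \emph{one} seminorm on the product space $\mathcal{M}_0(\mathbb{T}_M)^2\times\mathcal{M}_0(\mathbb{T}_M)^2$ and notes that the extreme points of its unit ball necessarily have one of the two components equal to $\boldsymbol{0}$ (an $\ell_1$-type sum of norms puts its extreme points on the axes). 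Combined with Proposition~\ref{Prop:EP_RITV}, every such extreme point is a single vector-valued Dirac comb living in exactly one component. Applying \cite[Theorem~3]{unser2021convex} to the $2M$-dimensional measurement map then yields at most $2M$ such atoms in total, distributed arbitrarily between the two components, which is precisely $K_1+K_2\le 2M$—with no separate treatment of the two regularizers and no degree-of-freedom heuristic.
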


The constraint $\textbf{r}_1(0) = \textbf{0}$ is necessary to handle the ill-posedness of the problem. Indeed, without this constraint, for any solution $(\textbf{r}_1(t), \textbf{r}_2(t))$ of Problem~\eqref{eq:continuous_hybrid_problem}, the pair $(\textbf{r}_1 + \textbf{v} , \textbf{r}_2 - \textbf{v})$, where $\textbf{v}$ is an arbitrary constant vector, would clearly also be a solution. This implies that the solution set would be unbounded, which can be problematic for numerical implementations. The constraint $\textbf{r}_1(0) = \textbf{0}$ resolves this issue without any restriction on the reconstructed curve, since any constant offset can be included in the $\textbf{r}_2$ component. See \cite{DebarreAU19} for more details concerning this question. 

\subsection{Discretization and Implementation}

As in Section \ref{sec:discrete_implementation}, we derive a discrete setting by using two sets of B-spline basis functions matched to their corresponding regularization operators. Given a grid of stepsize $h$, we consider closed $M$-periodic curves $\textbf{r}(t) = \textbf{r}_1(t) + \textbf{r}_2(t)$ such that, for $i = 1, 2$, we have

\begin{equation}
    \textbf{r}_i(t) = \left[
    \begin{array}{cc}
        x_{i}(t) \\
        y_{i}(t)
    \end{array}
    \right]
    = 
    \left[
    \begin{array}{cc}
        \sum_{n = 0}^{N-1} \mathrm{c}^i_{x}[n] \ \varphi^{i}_{M}(t - nh) \\
        \sum_{n = 0}^{N-1} \mathrm{c}^i_{y}[n] \ \varphi^{i}_{M}(t - nh)
    \end{array}
\right].
\end{equation}

The two regularization operators are set to $\mathrm{L}_i = \mathrm{D}^{\alpha_{i}+1}$, with $\mathrm{D}^{\alpha_{i}+1}$ the derivative operator of order $\alpha_{i}+1$ for $i = 1, 2$. As in Section \ref{sec:discrete_implementation}, these operators lead to spline solutions \cite{DebarreAU19}. We hence set $\varphi^{i}_{M}$ to B-splines of degrees $\alpha_{i}$ for $i = 1, 2$ and with $\alpha_1 < \alpha_2$. \\

This choice of curve allows us to optimize over the coefficients $\textbf{c}^1_{x} = (\mathrm{c}^1_x[n])^{N-1}_{n=0}$, $\textbf{c}^1_{y} = (\mathrm{c}^1_y[n])^{N-1}_{n=0}$, $\textbf{c}^2_{x} = (\mathrm{c}^2_x[n])^{N-1}_{n=0}$, and $\textbf{c}^2_{y} = (\mathrm{c}^2_y[n])^{N-1}_{n=0}$. 
As in Section \ref{sec:tv_based_optim}, we can define $\textbf{H}_{i}$ the system matrices producing and sampling the continuous curves $\textbf{r}_i$, as well as their corresponding regularization matrices $\textbf{L}_{i}$. We construct $\textbf{H}_{i}$ and $\textbf{L}_{i}$ as in \eqref{eq:system_matrix} and \eqref{eq:reg_matrix}, respectively. Finally, we cater to the constraint $\textbf{r}_1(0) = \textbf{0}$ by enforcing that $(\textbf{c}^1_x * \textbf{b}_{\alpha_1})[0] = 0$ and $(\textbf{c}^1_y * \textbf{b}_{\alpha_1})[0] = 0$, where $\mathrm{b}_{\alpha_1}[k] = \varphi^1_M(kh)$ and $*$ denotes a cyclic discrete convolution.  \\

Our discrete hybrid-optimization problem takes the form

\begin{eqnarray}
&& \argmin_{\substack{\textbf{c}^1_{x}, \textbf{c}^1_{y}, \\ \textbf{c}^2_{x}, \textbf{c}^2_{y} \in \R^N}} \quad \norm{
        \begin{bmatrix}
            \textbf{H}_1 & \textbf{0} & \textbf{H}_2 & \textbf{0} \\
            \textbf{0} & \textbf{H}_1 & \textbf{0} & \textbf{H}_2 
        \end{bmatrix}
        \left[
        \begin{array}{cc}
            \textbf{c}^1_{x} \\
            \textbf{c}^1_{y} \\
            \textbf{c}^2_{x} \\
            \textbf{c}^2_{y} 
        \end{array}
        \right]
        - 
        \left[
        \begin{array}{cc}
            \textbf{p}_x \\
            \textbf{p}_y
        \end{array}
        \right]
    }_2^2 \nonumber \\
&&\mbox{}+ \lambda_1 \norm{\textbf{L}_1 \left[ \textbf{c}^1_{x} \ \textbf{c}^1_{y}\right]}_{\ell_1-\ell_2} + \lambda_2 \norm{ \textbf{L}_2 \left[ \textbf{c}^2_{x} \ \textbf{c}^2_{y}\right]}_{\ell_1-\ell_2}, \label{eq:min_hybrid_problem}
\end{eqnarray}

\noindent subject to

\begin{align}
    & \mbox{} \nonumber (\textbf{c}^1_x * \textbf{b}_{\alpha_1})[0] = 0,\\
& \mbox{} (\textbf{c}^1_y * \textbf{b}_{\alpha_1})[0] = 0.
\end{align}

As in Section \ref{sec:discrete_implementation}, we use the ADMM solver \cite{BoydADMM} to find a solution to Problem \eqref{eq:min_hybrid_problem} and GlobalBioIm \cite{GlobalBioIm} to implement our algorithms.

\section{Experiments}\label{sec:experiments}

We evaluate the distance between the constructed curves and the contour points through the quadratic fitting error (QFE) defines as

\begin{equation}\label{eq:QFE}
    \text{QFE} = \frac{1}{M} \sum_{m = 0}^{M-1} \norm{\left.\textbf{r}(t)\right\vert_{t=m} - \textbf{p}[m]}_2^2.
\end{equation}

It is noteworthy that the QFE can be used at the same time in the single-spline setting and in the hybrid setting. Indeed, by replacing the hybrid curve $\textbf{r} = \textbf{r}_1 + \textbf{r}_2$ in \eqref{eq:QFE}, we obtain a QFE that is consistent with the data-fidelity term in \eqref{eq:continuous_hybrid_problem}.

 For computational reasons, we chose the lowest resolution, i.e. the largest grid size $h$, that allowed us to solve the problem in a satisfactory way, thus effectively making $h$ a hyperparameter. In this work, the number of knots $N$ was chosen so that it matched the order of magnitude of the number of data points. It is important to note that increasing $N$, thus splitting the grid, can only improve the solution in terms of cost. 

\subsection{Rotation Invariance}

To verify that our regularization norm is truly rotation-invariant, we apply a planar rotation of angle $\theta$ to our data before we reconstruct the curve with the regularization operator $\mathrm{L} = \mathrm{D}^2$. We have added to the data a Gaussian perturbation with a signal-to-noise ratio (SNR) of $47.28$ dB. We compare the curves reconstructed with our regularization to the curves resulting from the $\left(\text{TV-}\ell_1\right)$ regularization of Definition \ref{Def:Regularization}. Indeed, $\ell_1$ regularization is widely used in the signal-processing community as a sparsifying prior. To do so, we choose $\lambda$ in the non-rotated $\left(\text{TV-}\ell_1\right)$ regularized curve (Figure \ref{l1_M_rot0}) so that the QFE matches the QFE from the non-rotated RI-TV regularized curve (Figure \ref{mixed_M_rot0}). When rotating the measurements, we adjust $\lambda$ again so that the QFE of the $\left(\text{TV-}\ell_1\right)$-regularized curve on the rotated points matches the one of the curve constructed with RI-TV regularization with rotated points. We see in Figure \ref{fig:rot_invariance} that the RI-TV-regularized problem provides the same solution regardless of $\theta$. Indeed, the knot locations do not differ between Figures \ref{mixed_M_rot0} and \ref{mixed_M_rot40}, nor does the number $K$ of knots. This is not the case for the purely $\left(\text{TV-}\ell_1\right)$-regularized problem. Not only are the knot locations different when a rotation is applied to the measurements, but the number $K$ of knots varies with $\theta$ as well as the QFE of the curve. Additionally, one needs to adapt $\lambda$ to obtain the same QFE between the constructed curves on rotated and non-rotated measurements.

\begin{figure*}[t!]
    \centering
    
    \subfloat[RI-TV regularization, $\theta = 0^{\circ}$, $K = 20$,
    $\lambda = 700$, 
    $\text{QFE} = 12.09$.]{\includegraphics[width=0.50\textwidth]{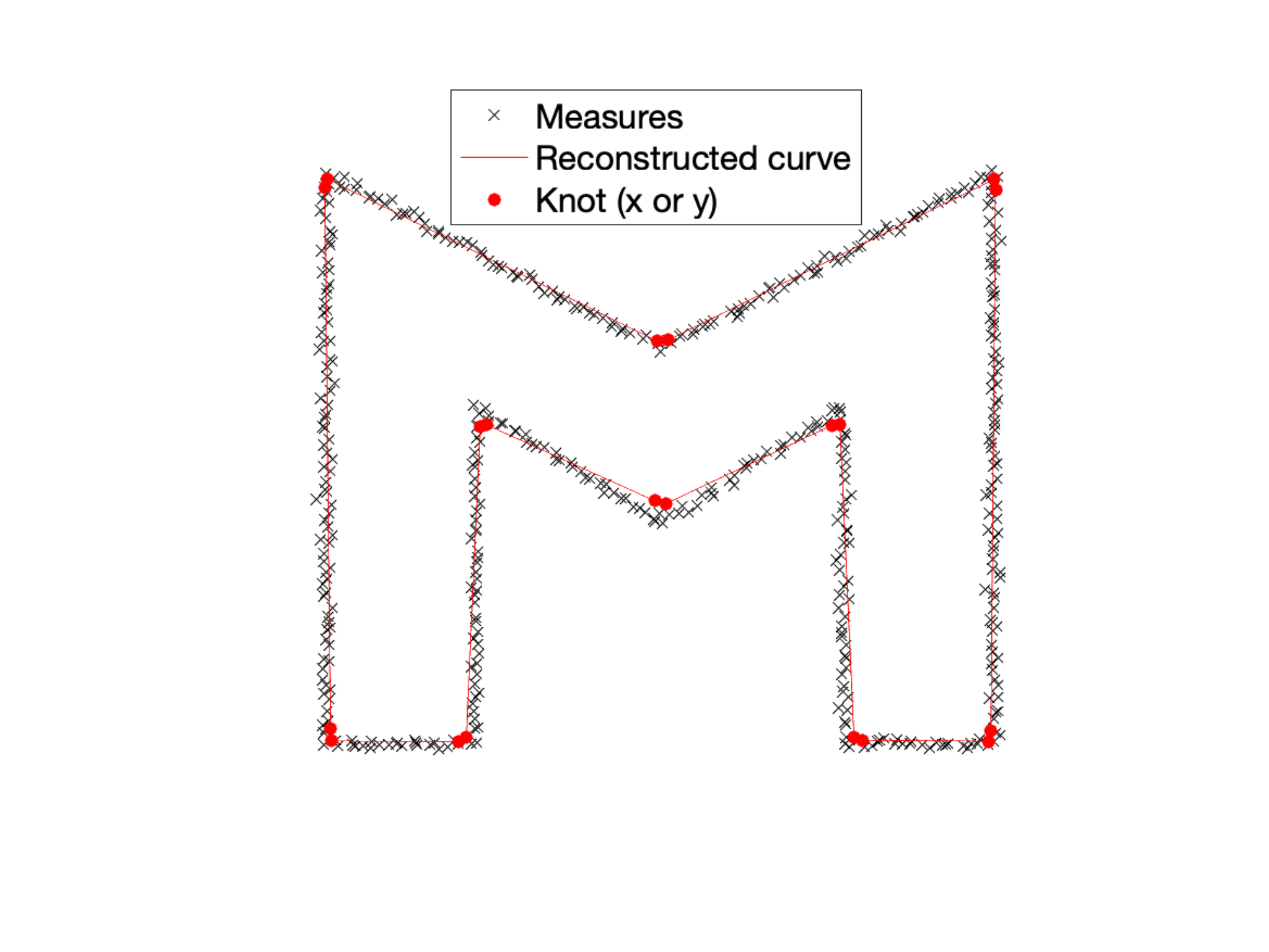}\label{mixed_M_rot0}}
    \hfill
    \subfloat[RI-TV regularization, $\theta = 40^{\circ}$, $K = 20$, 
    $\lambda = 700$, 
    $\text{QFE} = 12.09$.]{\includegraphics[width=0.50\textwidth]{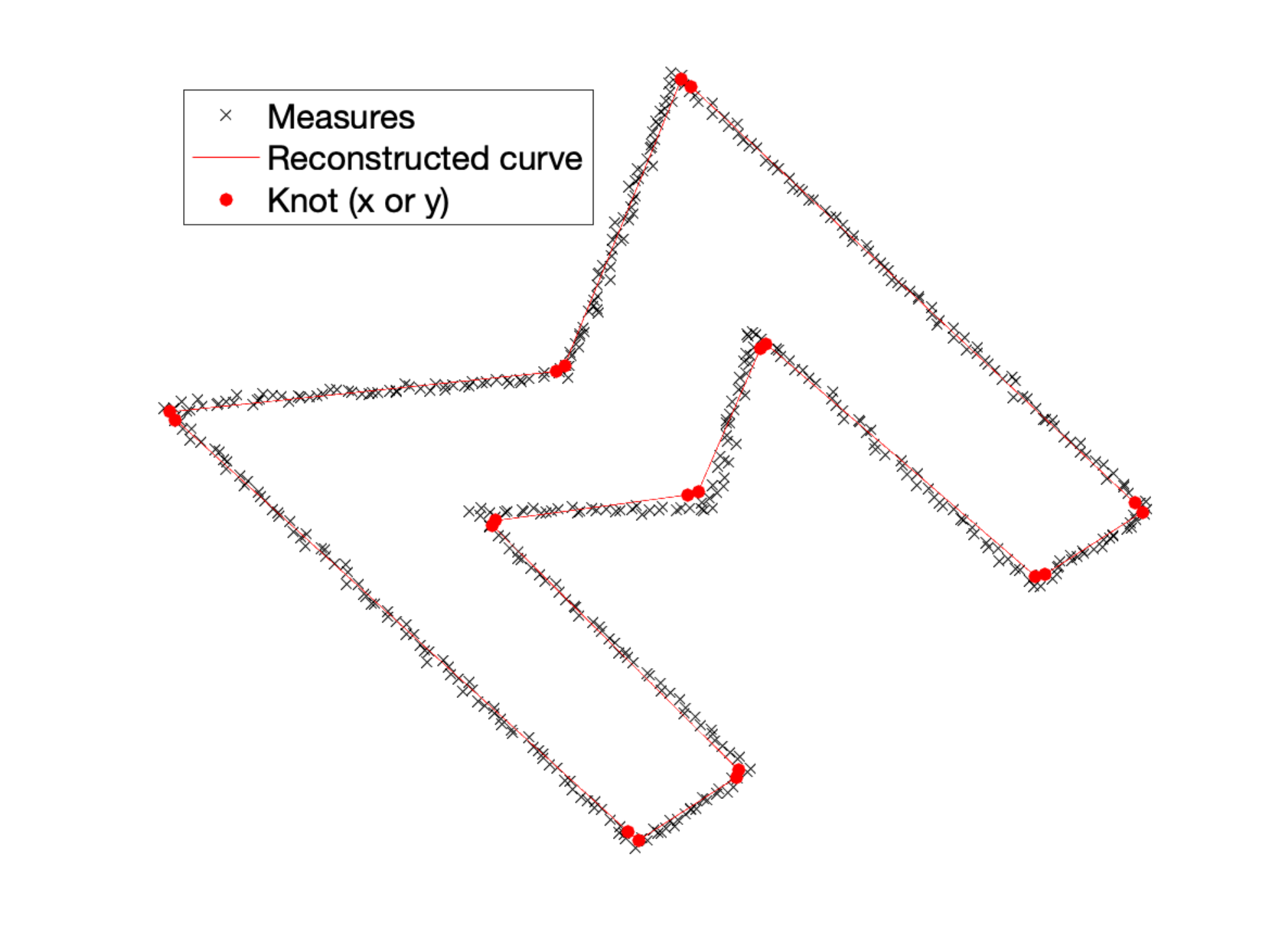}\label{mixed_M_rot40}}
    
    \subfloat[$\left(\text{TV-}\ell_1\right)$ regularization, $\theta = 0^{\circ}$, $K = 37$, 
    $\lambda = 482.13$, 
    $\text{QFE} = 12.09$.]{\includegraphics[width=0.50\textwidth]{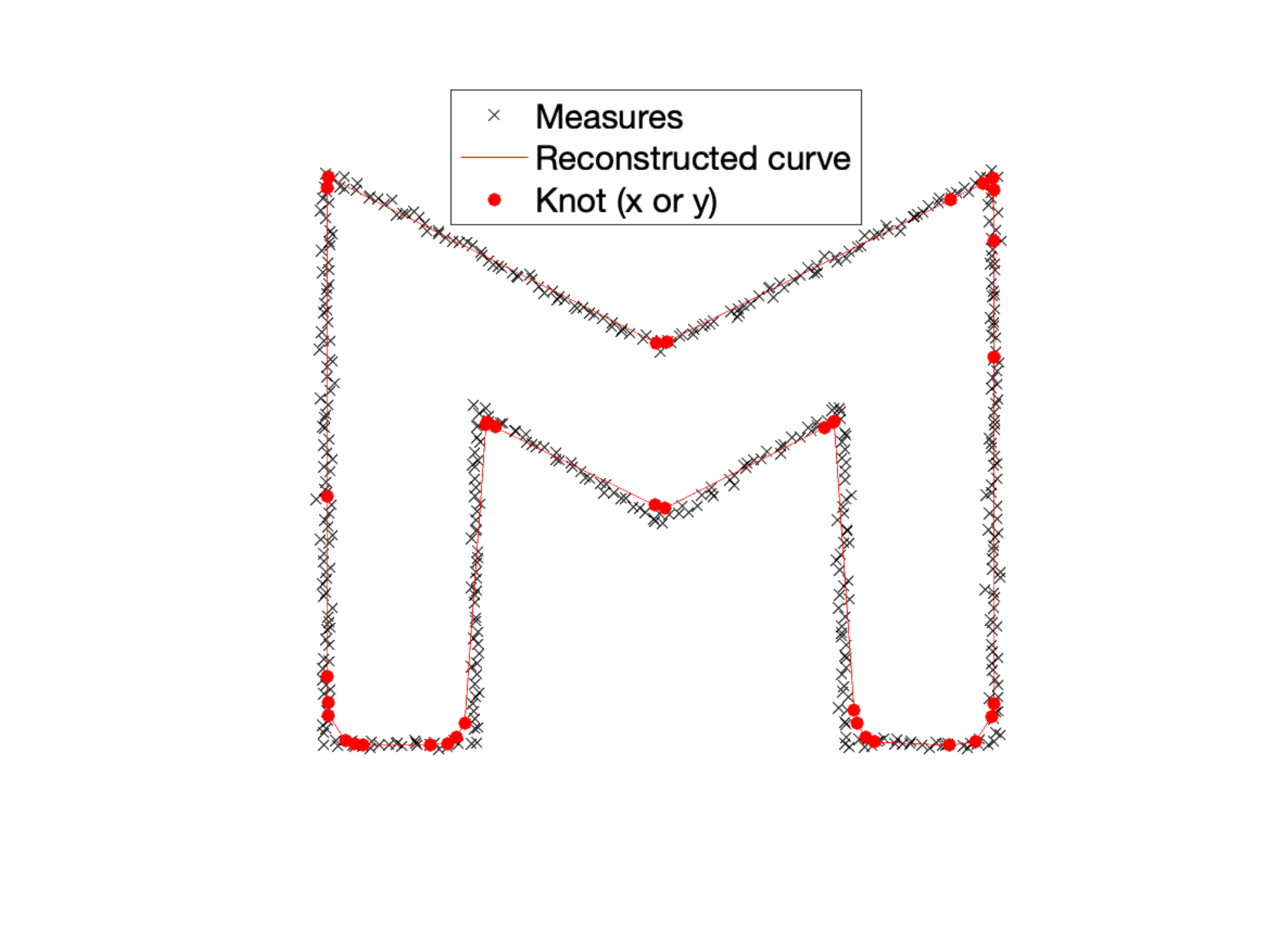}\label{l1_M_rot0}}
    \hfill
    \subfloat[$\left(\text{TV-}\ell_1\right)$ regularization, $\theta = 40^{\circ}$, $K = 29$, 
    $\lambda = 500.93$, 
    $\text{QFE} = 12.09$.]{\includegraphics[width=0.50\textwidth]{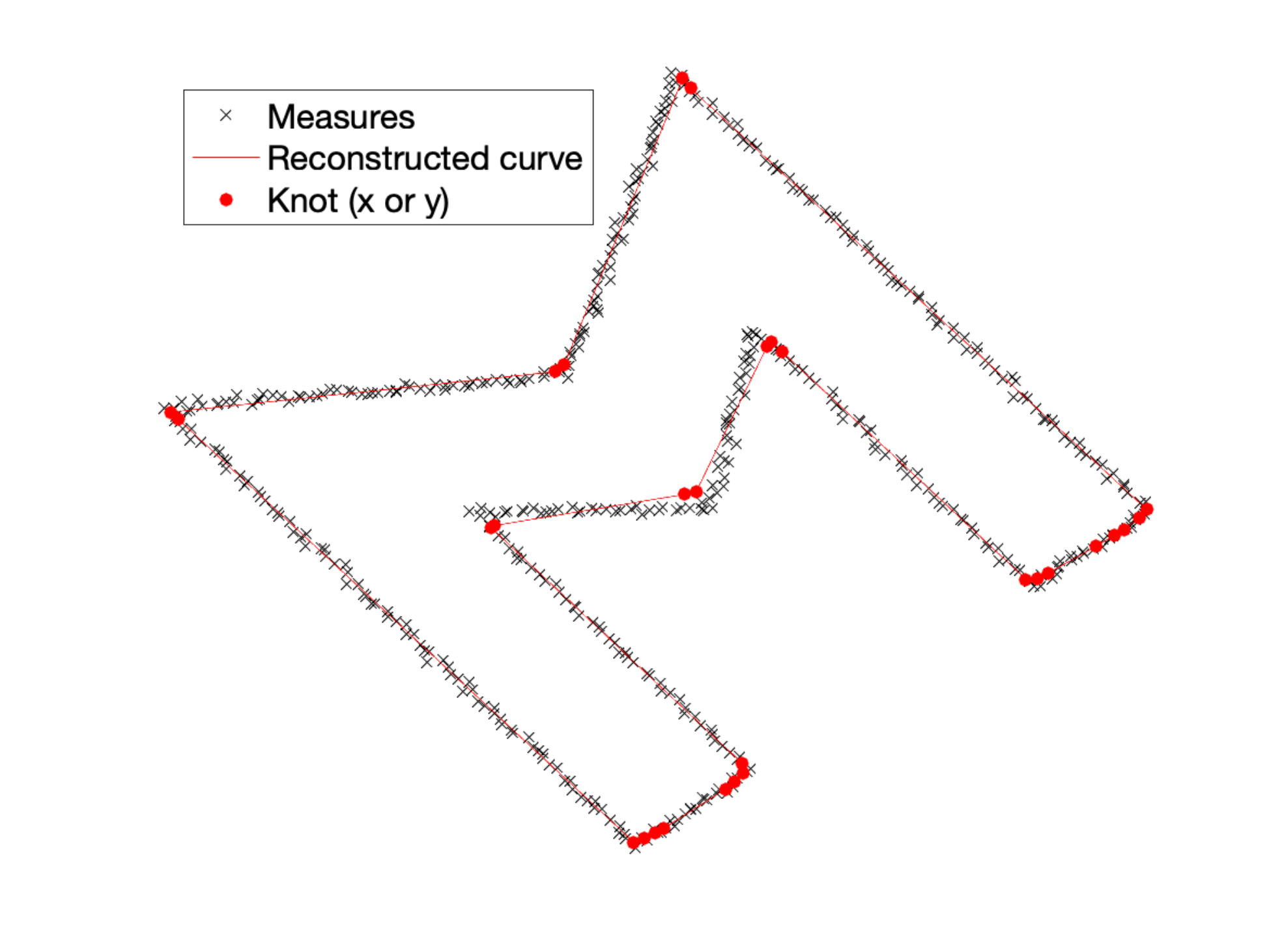}\label{l1_M_rot40}}
    
    \caption{Solutions as a function of the rotation angle $\theta$ for RI-TV regularization and $\left(\text{TV-}\ell_1\right)$ regularization for a same contour. $M = 488$, grid stepsize $h = 1.9062$, $\varphi = \beta^1$.}
    \label{fig:rot_invariance}
    
\end{figure*}

\subsection{Sparsity}

We compared the results obtained using our framework to a widespread technique for curve approximation, which is knot-removal algorithms for B-spline curves. As a reference, we chose the method implemented in the NURBS-Python (geomdl) library \cite{bingol2019geomdl}, which allows the user to input data points and the number of knots desired and outputs a B-spline curve. The data fed to both approaches has additive Gaussian noise with a signal-to-noise ratio (SNR) of $47.28$ dB. For a fair comparison, we equalized the number of knots in both solutions. The results are shown in Figure \ref{fig:sparsity}. We observe that the curve obtained with our solution is very close to the data points, and that the knots are placed at relevant locations where the underlying signal's singularities are expected. The solution provided by NURBS-Python (geomdl) when the number of knots is set to the same as our solution, albeit being rotation-invariant, is far from the data points. The knot locations do not seem to follow the underlying signal's expected singularities. This is confirmed by the difference in QFE. Indeed, the curve obtained with the RI-TV-regularized problem has a QFE of $\text{QFE} = 12.09$ relative to the data points, versus a QFE of $\text{QFE} = 153.47$ for the NURBS-Python curve. This result shows that our framework is able to maximize the sparsity of the constructed curve while remaining faithful to the data. 

\begin{figure}
    \centering
    
    \subfloat[RI-TV regularization, $K = 20$,
    $\lambda = 700$, 
    $\text{QFE} = 12.09$.]{\includegraphics[width=0.50\textwidth]{figures/mixed_M_rot0-eps-converted-to.pdf}\label{mixed_M_rot0_}}
    \hfill
    
    \subfloat[NURBS-Python (geomdl) curve approximation, $K = 20$, 
    $\text{QFE} = 153.47$.]{\includegraphics[width=0.50\textwidth]{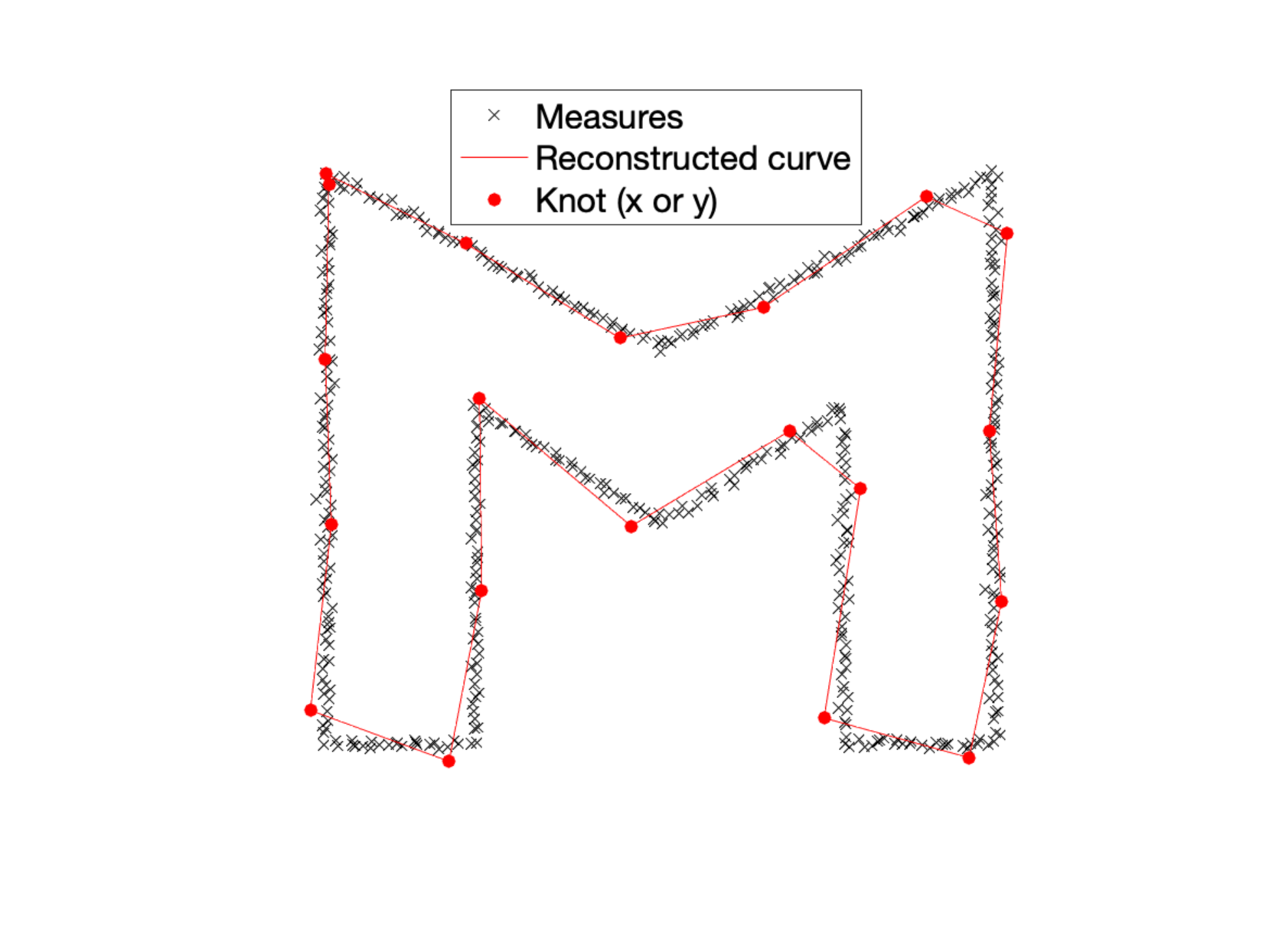}\label{geomdl_M_rot0}}
    
    \caption{Solutions for our framework and NURBS-Python (geomdl) for a same contour and with the same number of knots. $M = 488$, $\varphi = \beta^1$.}
    \label{fig:sparsity}
    
\end{figure}

\subsection{Resilience to Contour Imprecisions}

A beneficial feature derived from the enforcement of joint sparsity in the two curve components is resilience of our reconstructions to imprecisions in the contour points. Indeed, when we expect our data to be imprecise, we can choose to increase the regularization parameter $\lambda$ at the cost of data fidelity, as the curve cannot rely as much on the data. Particularly, when the regularizer is TV-based, an increase in $\lambda$ tends to smoothen sharp variations. This is visible in Figure \ref{fig:noise_resilience}, where several curves have been reconstructed using linear B-splines $\beta^1$. 
Figures \ref{fig:mixed_M_nonoise}, \ref{fig:mixed_M_sigma1}, and \ref{fig:mixed_M_sigma2} are reconstructions of increasingly inaccurate measurements using RI-TV. Figures \ref{fig:l1_M_nonoise}, \ref{fig:l1_M_sigma1}, and \ref{fig:l1_M_sigma15} depict reconstructions resulting from a sparsifying regularization without coupling $\left(\text{TV-}\ell_1\right)$, using a $\lambda$ tuned so that the QFE matches the QFE of the curves regularized by RI-TV. When TV regularization is used, and as the contour becomes more inaccurate, the number $K$ of knots drastically increases and the angles are deformed. On the contrary, for the reconstructions in Figures \ref{fig:mixed_M_nonoise}, \ref{fig:mixed_M_sigma1}, even as the noise and $\lambda$ increase, the number $K$ of knots remains unchanged and the angles are fairly well preserved.

\begin{figure*}[t!]
    \centering
    
    \subfloat[RI-TV regularization, no noise, $\lambda = 8$, $K = 20$, $\text{QFE} = 5.86$.]{\includegraphics[width=0.32\textwidth]{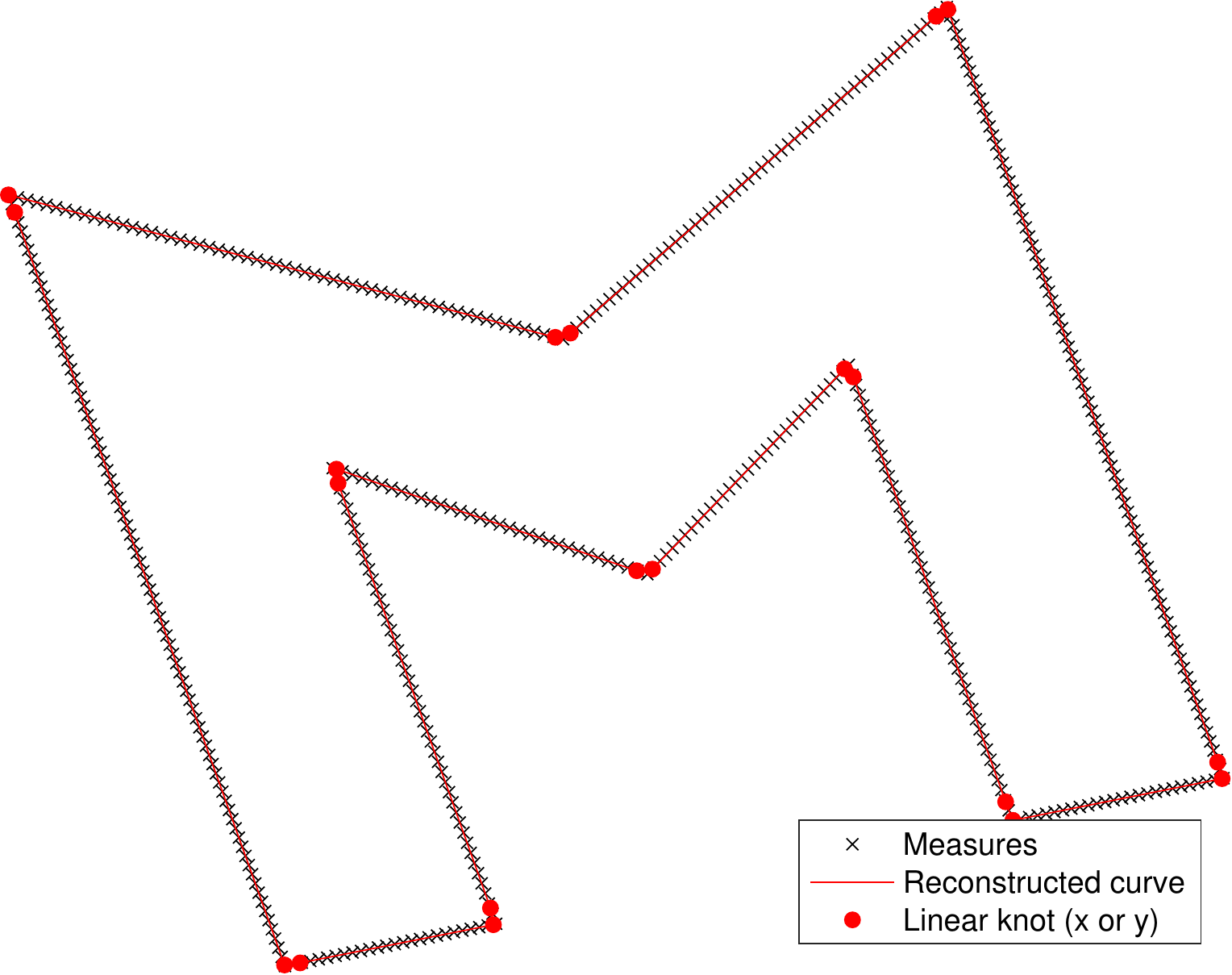}\label{fig:mixed_M_nonoise}}
    \hfill
    \subfloat[RI-TV regularization, SNR $ = 47.05$ dB, $\lambda = 700$, $K = 20$, $\text{QFE} = 12.14$.]{\includegraphics[width=0.32\textwidth]{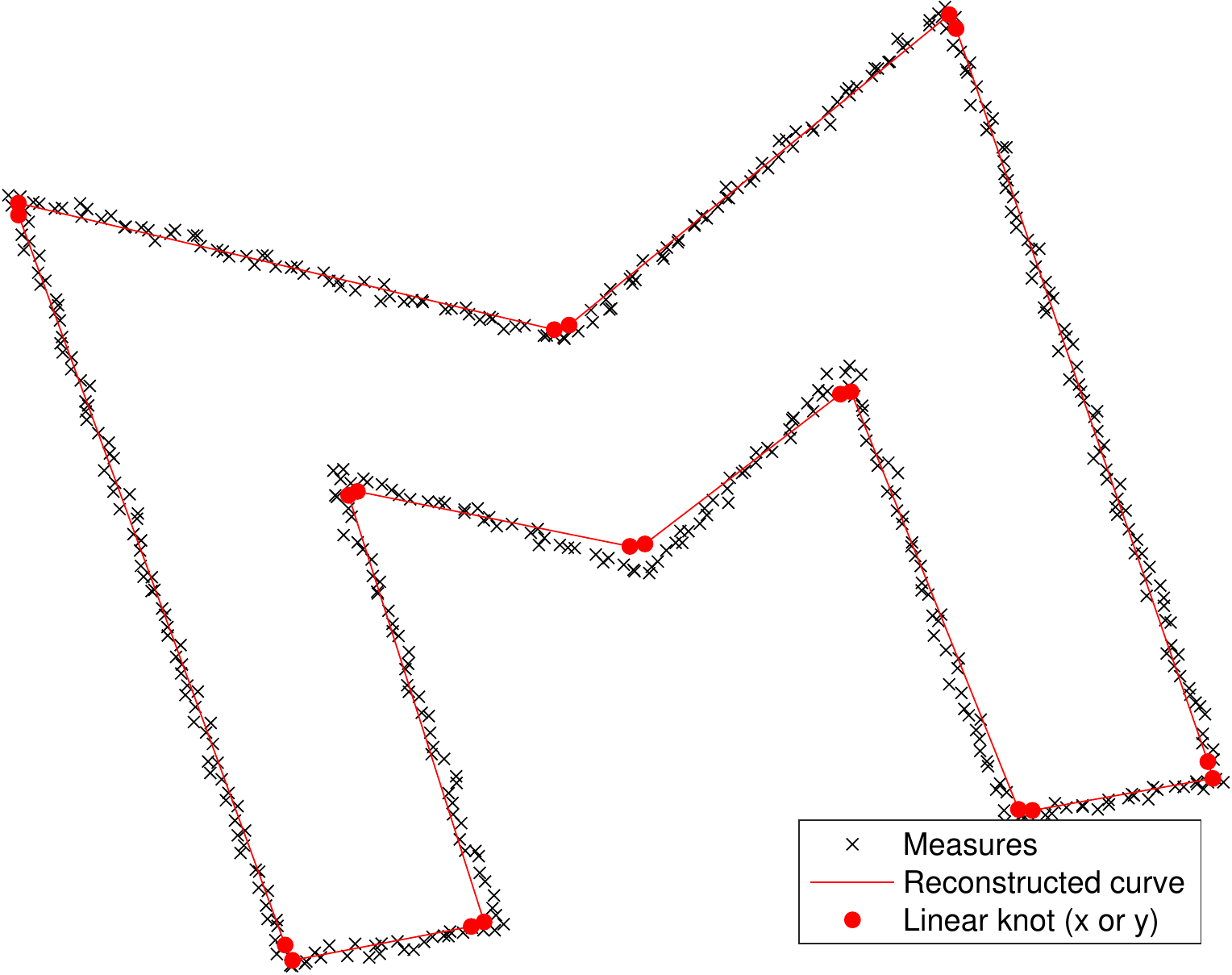}\label{fig:mixed_M_sigma1}}
    \hfill
    \subfloat[RI-TV regularization, SNR $ = 41.20$ dB, $\lambda = 800$, $K = 20$, $\text{QFE} = 18.95$.]{\includegraphics[width=0.32\textwidth]{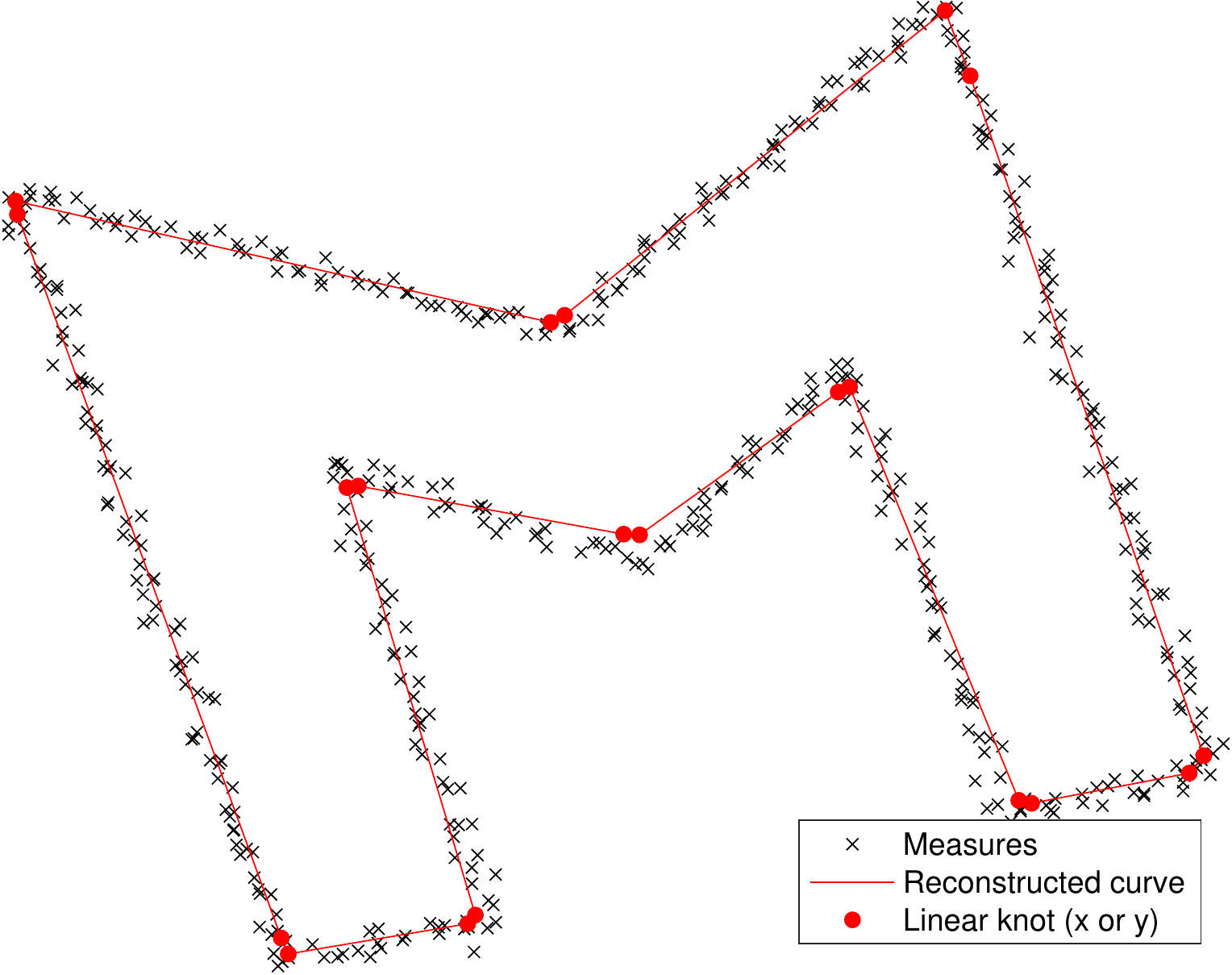}\label{fig:mixed_M_sigma2}}
    
    \subfloat[$\left(\text{TV-}\ell_1\right)$ regularization, no noise, $\lambda = 10$, 
    $K = 20 $, $\text{QFE} = 5.86$.]{\includegraphics[width=0.32\textwidth]{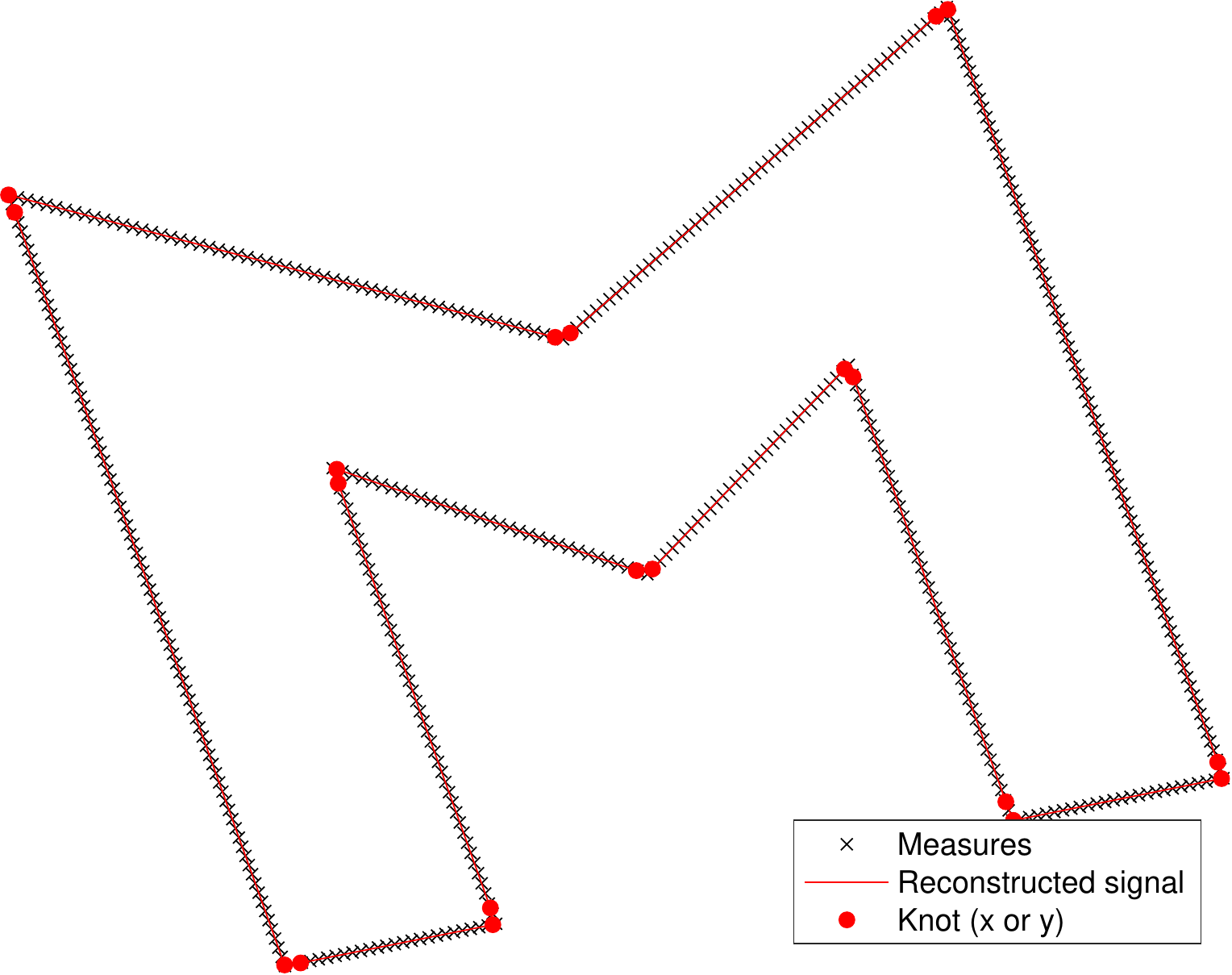}\label{fig:l1_M_nonoise}}
    \hfill
    \subfloat[$\left(\text{TV-}\ell_1\right)$ regularization, SNR $ = 47.05$ dB, 
    $\lambda = 459.45$, $K = 36$, $\text{QFE} = 12.14$.]{\includegraphics[width=0.32\textwidth]{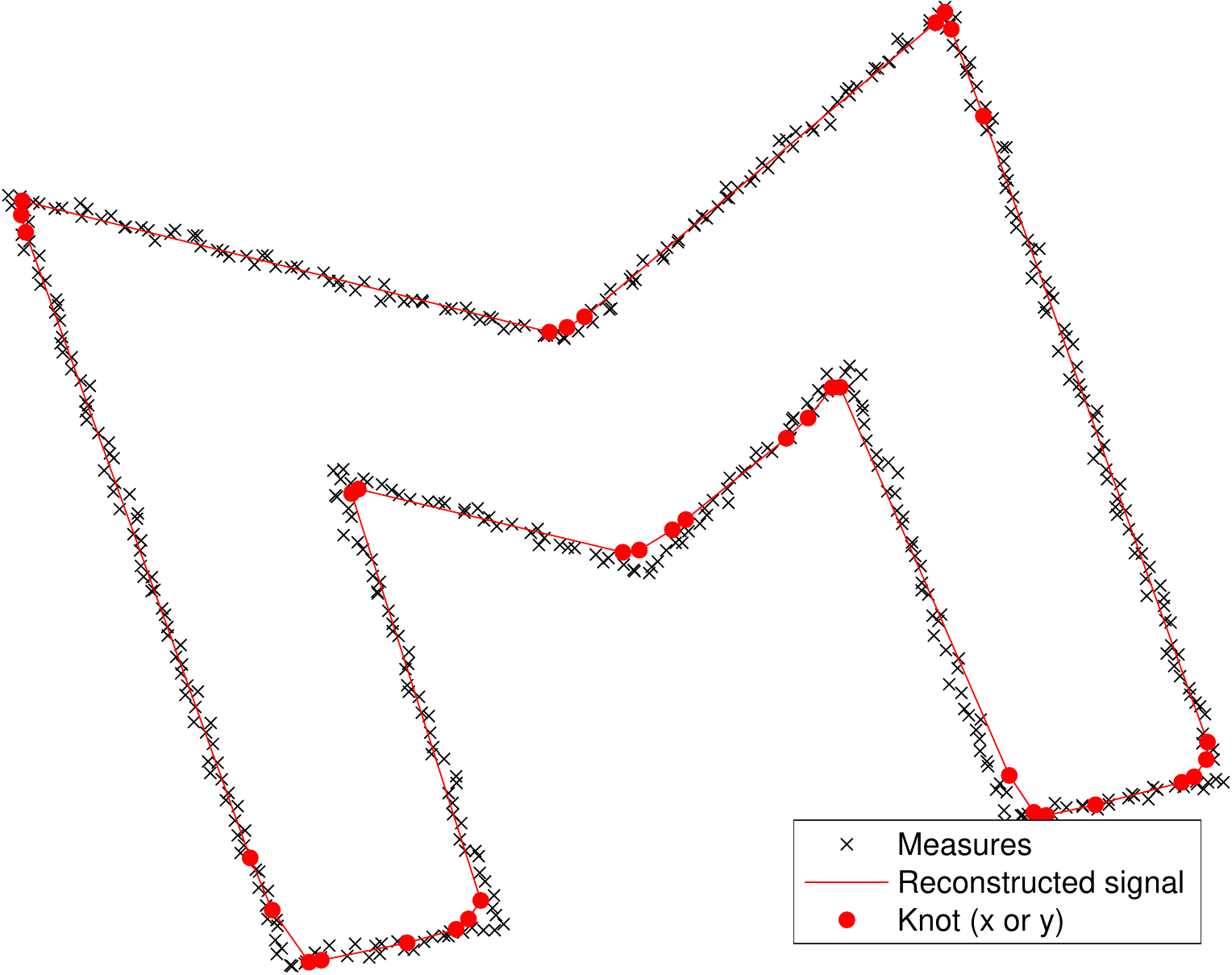}\label{fig:l1_M_sigma1}}
    \hfill
    \subfloat[$\left(\text{TV-}\ell_1\right)$ regularization, SNR $ = 41.20$ dB, 
    $\lambda = 531.35$, $K = 35$, $\text{QFE} = 18.95$.]{\includegraphics[width=0.32\textwidth]{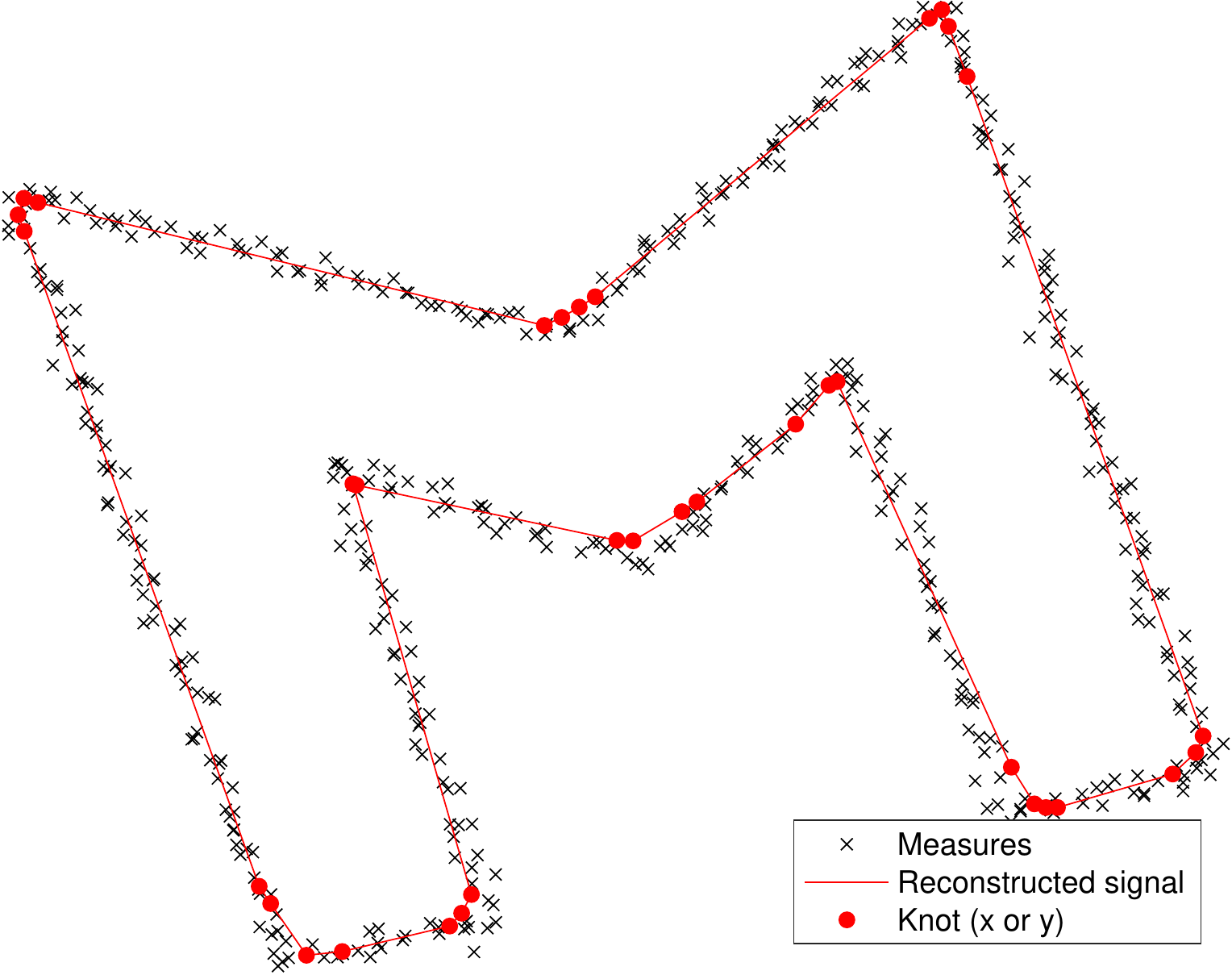}\label{fig:l1_M_sigma15}}
    
    \caption{Resilience to noise for RI-TV regularization and $\left(\text{TV-}\ell_1\right)$ regularization.}
    \label{fig:noise_resilience}
\end{figure*}

\subsection{Hybrid Setting Applications}

The single-component framework described in Sections \ref{sec:tv_based_optim} and \ref{sec:discrete_implementation} only allows for the use of one kind of B-spline per curve. However, when the contour under consideration is composed of smooth sections and kinks, no single type of B-spline can provide a faithful and sparse reconstruction. An example of curve fitting that depicts this problem is given in Figure \ref{fig:hybrid}. We reconstructed the contour using first $\beta^1$ and $\beta^3$ as basis functions, giving piecewise-linear and piecewise-cubic curves in Figures \ref{fig:mixed_G_linear} and \ref{fig:mixed_G_cubic}, respectively. Figure \ref{fig:mixed_G_hybrid} contains a reconstruction under the hybrid setting $\mathrm{L_1} = \mathrm{D}^2$ and $\mathrm{L_2} = \mathrm{D}^4$, thus producing a curve that has both a linear and a cubic component. While all three reconstructions yield the same QFE with respect to the data, the hybrid curve has by far the smallest number of knots. Moreover, upon visual inspection, the hybrid curve in Figure \ref{fig:mixed_G_hybrid} portrays the most faithful reconstruction, as it does round neither the angles nor the straight lines, nor does it straighten the smooth sections. 

\begin{figure}
\centering
    \subfloat[Spline degree: $1$, $\lambda = 31.87$, $K = 89$.]{
    \includegraphics[width=0.48\textwidth]{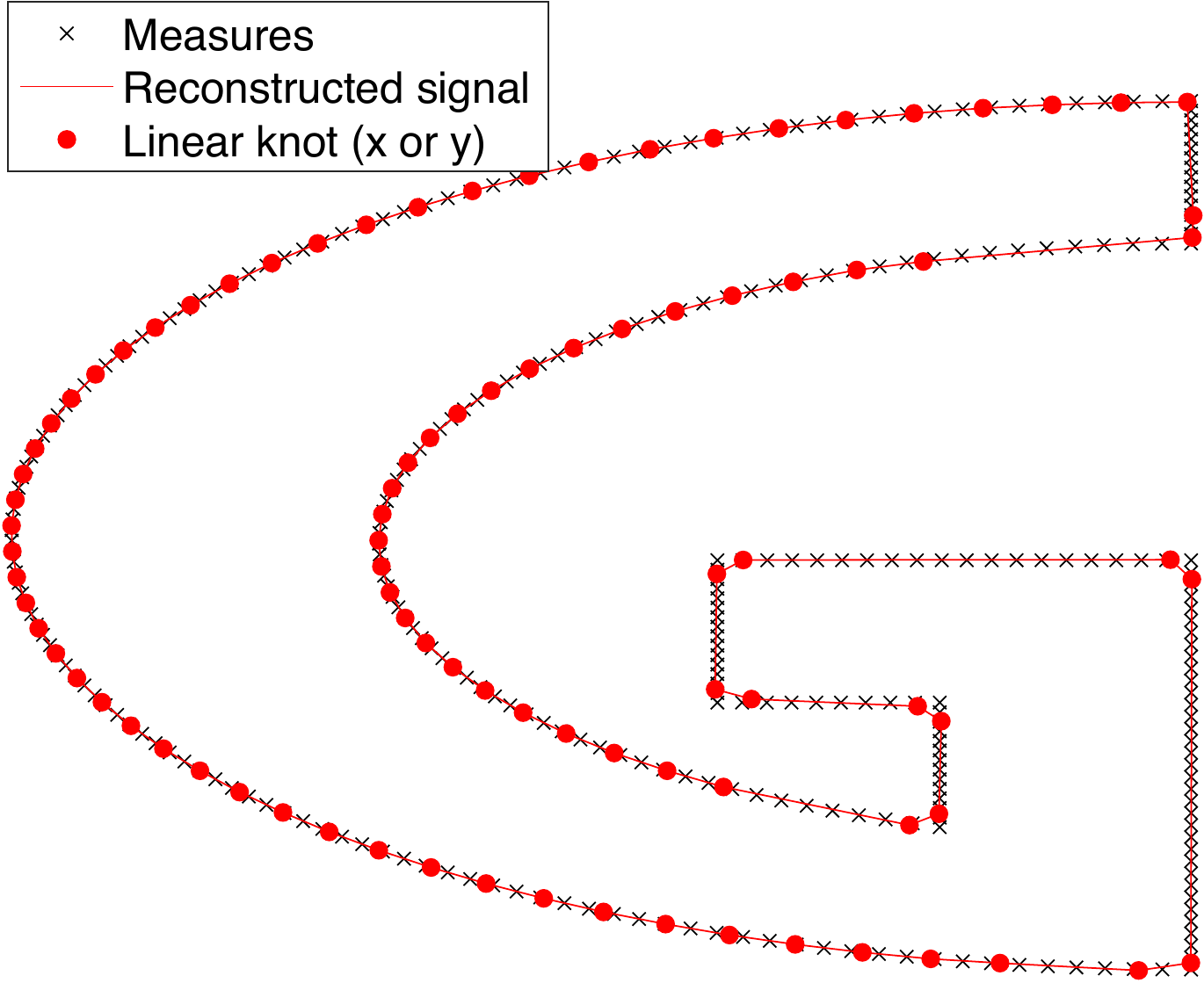}\label{fig:mixed_G_linear}}
    \hfill
    \subfloat[Spline degree: $3$, $\lambda = 24.72$, $K = 44$.]{\includegraphics[width=0.48\textwidth]{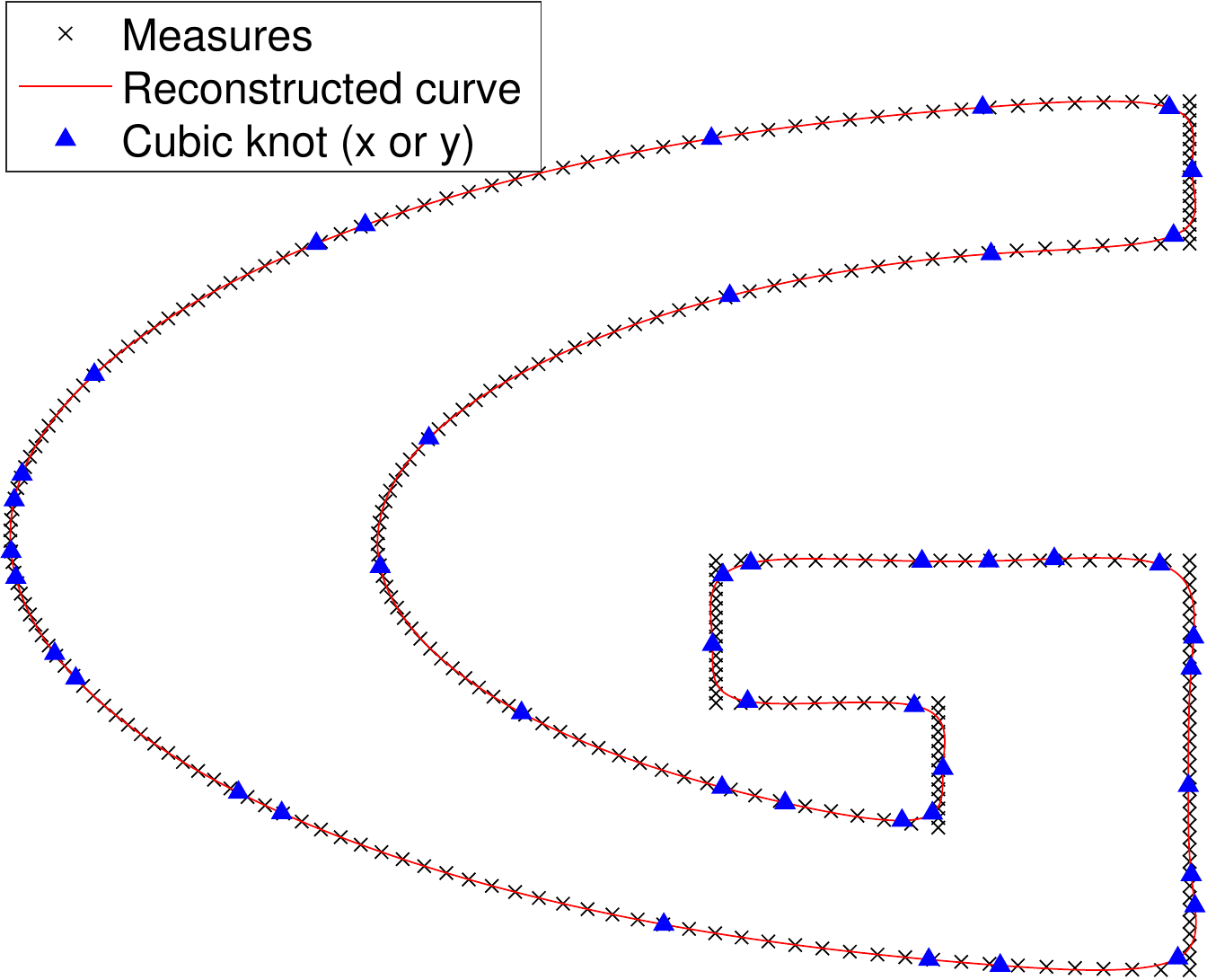}\label{fig:mixed_G_cubic}}
    \hfill
    \subfloat[Spline degrees: $1$ and $3$, $\lambda_1 = 80$, $\lambda_2 = 90$,  $K = 37$]{\includegraphics[width=0.48\textwidth]{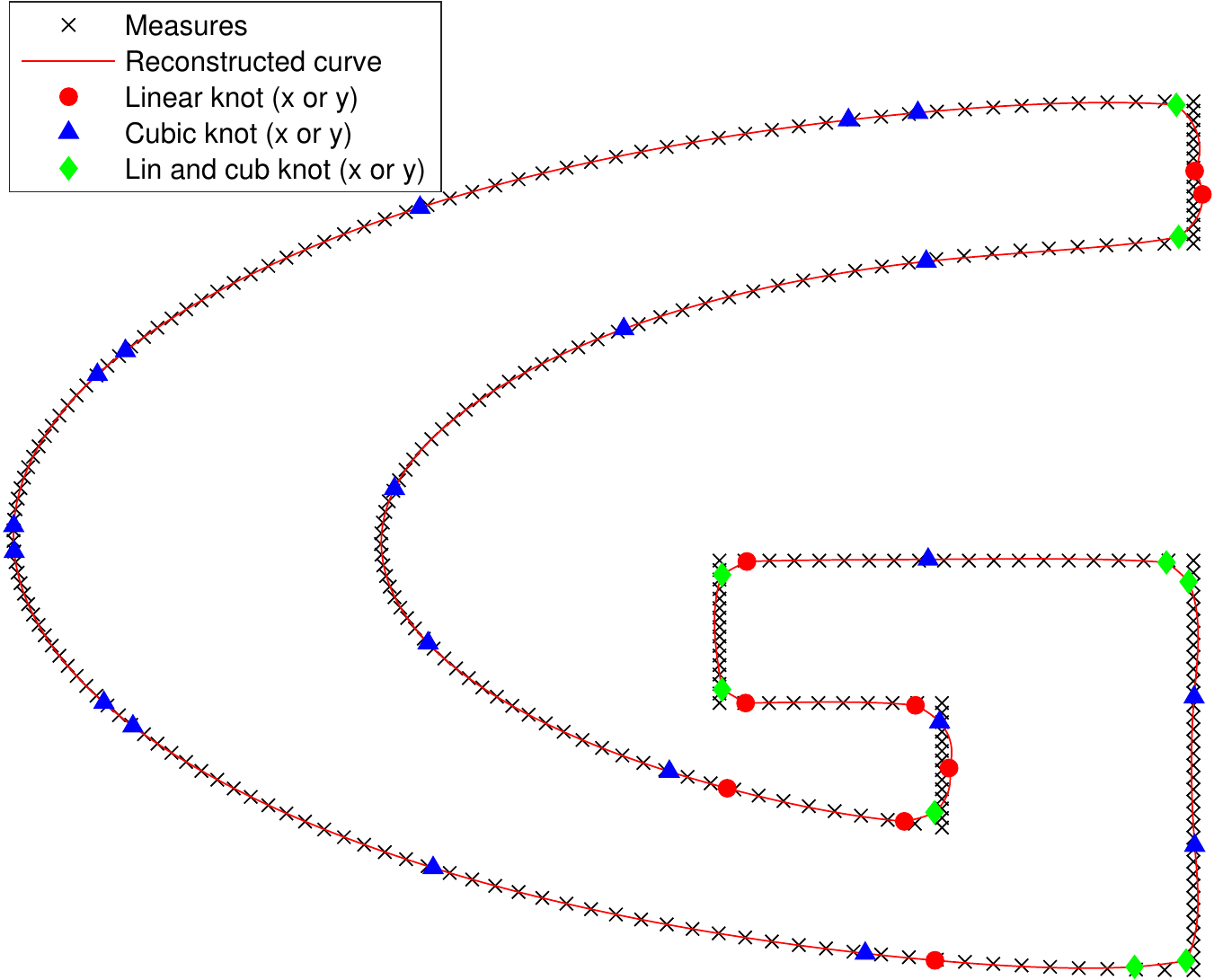}\label{fig:mixed_G_hybrid}}%
\caption{Noiseless curve reconstruction with a single spline, a hybrid setting, and RI-TV regularization. All three reconstructions have a constant $\text{QFE} = 8.88$.}
\label{fig:hybrid}
\end{figure}

We can observe the effect of the parameters $\lambda_1$ and $\lambda_2$ on the constructed curve when the hybrid reconstruction setting is applied to real contour points for a constant ratio of knots $\frac{K_1}{K_2} = 0.86$. 
In Figure \ref{fig:daftpunk}, as $\lambda_1$ and $\lambda_2$ increase, the total number $K$ of knots decreases and the curve becomes more stylized. In addition, for all values of $\lambda_1$ and $\lambda_2$, our algorithm preserves the kinks of the contour while mimicking its smooth segments.  

\begin{figure*}
    \centering
    
    \subfloat[Data. ]{\includegraphics[width=0.48\textwidth]{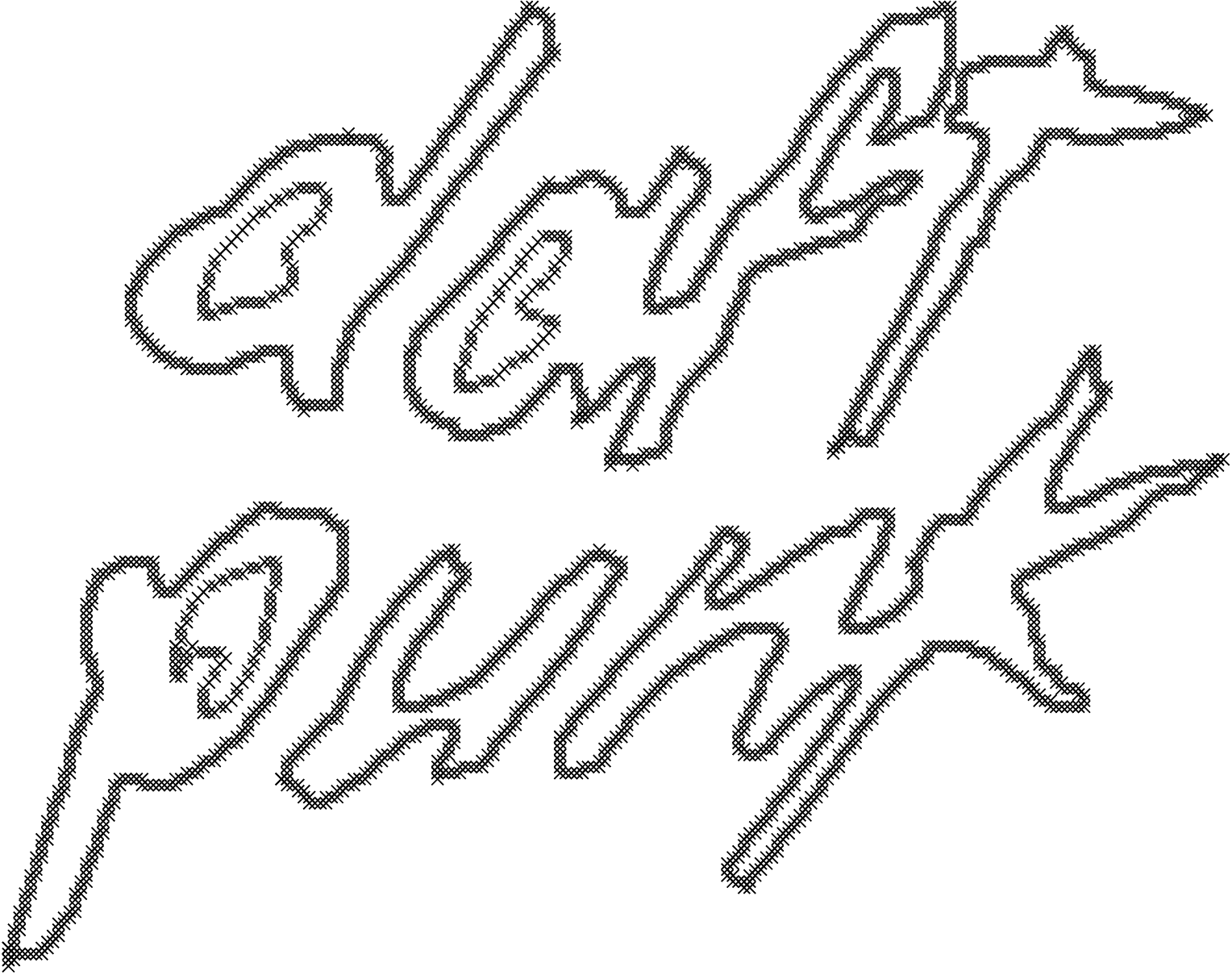}\label{fig:daftpunk_measures}}
    \hfill
    \subfloat[$\lambda_1 = 5$, $\lambda_2 = 95$, $K = 312$, $\text{QFE} = 0.80$.
    ]{\includegraphics[width=0.48\textwidth]{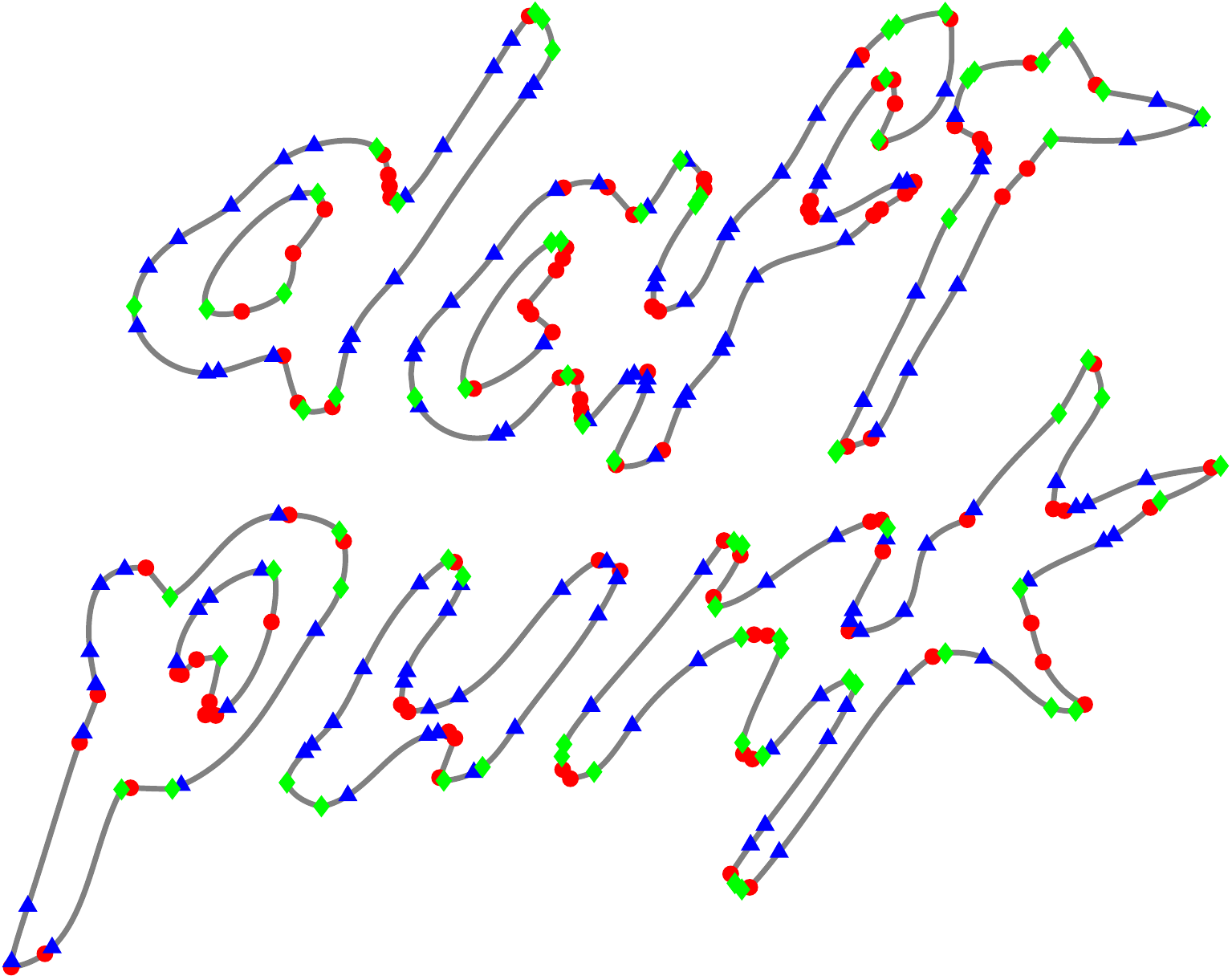}\label{fig:daftpunk_l100}}
    \hfill
    \subfloat[$\lambda_1 =20$, $\lambda_2 = 980$, $K = 229$, $\text{QFE} = 1.11$.
    ]{\includegraphics[width=0.48\textwidth]{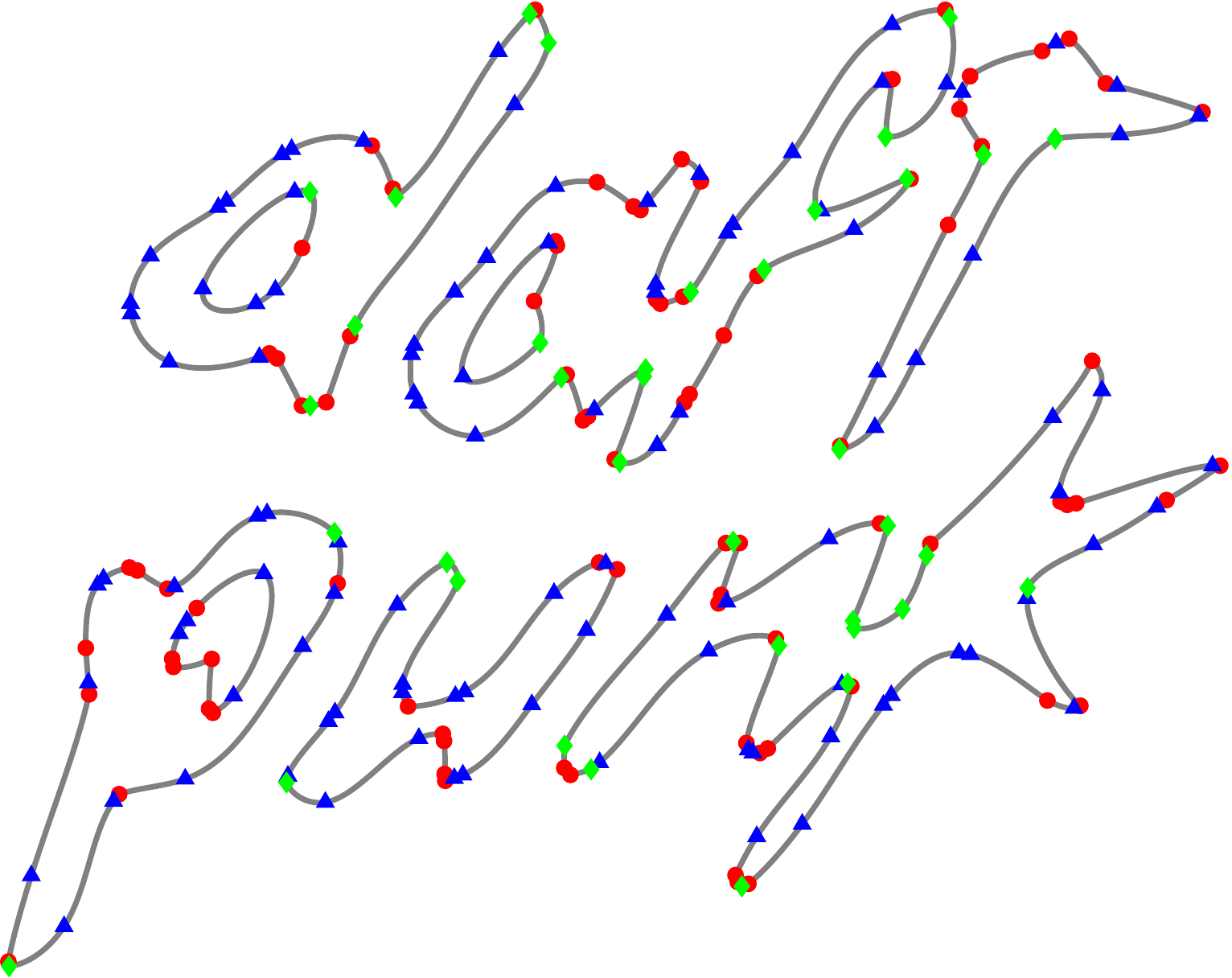}\label{fig:daftpunk_l1000}}
    \hfill
    \subfloat[$\lambda_1 =100$, $\lambda_2 = 9900$, $K = 139$, $\text{QFE} = 2.82$.]{\includegraphics[width=0.48\textwidth]{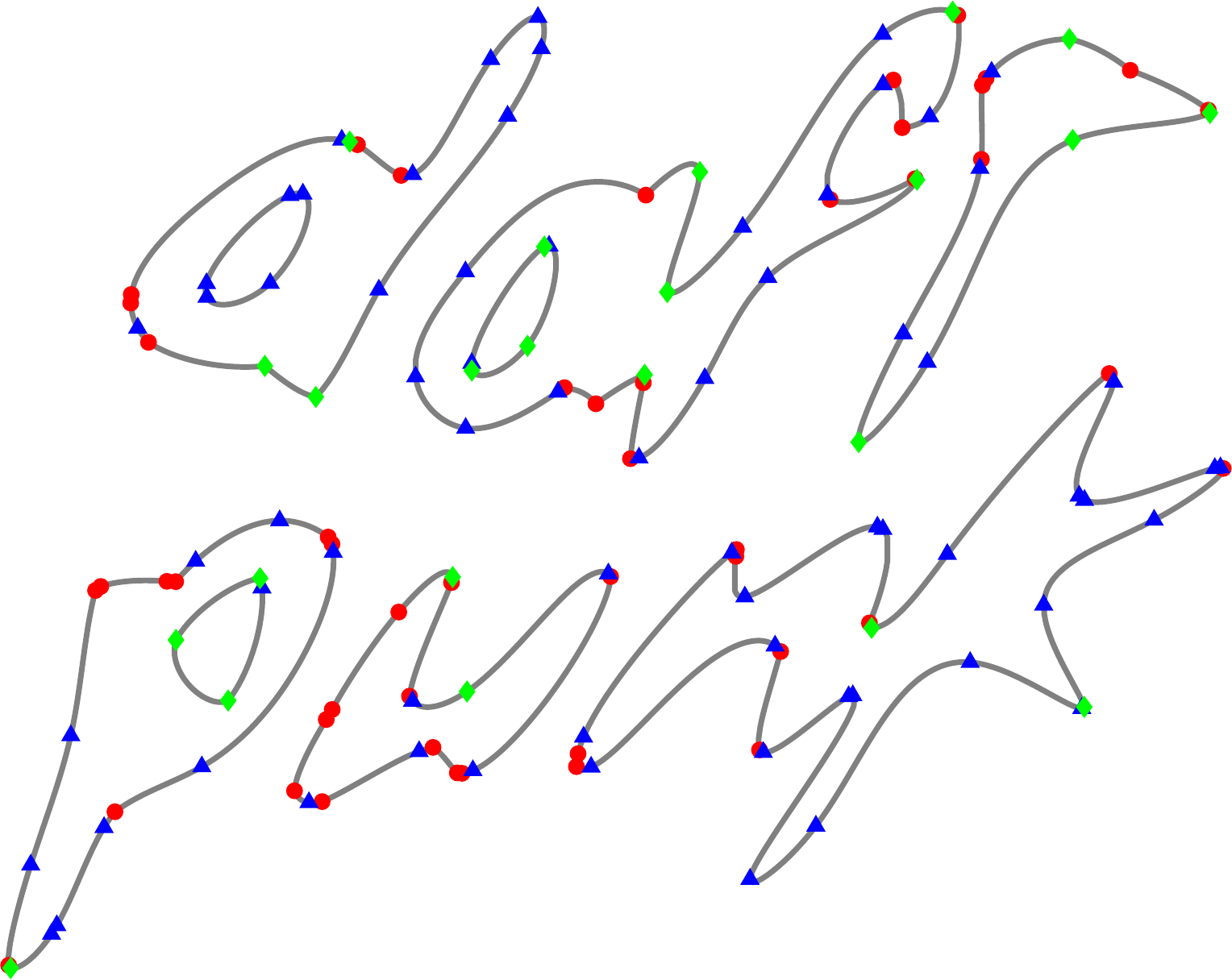}\label{fig:daftpunk_l10000}}
    
    \caption[DP wiki]{Effect of $\lambda_1$ and $\lambda_2$ on the reconstructed hybrid curve for $M = 2714$ under RI-TV regularization. The reconstructed curve is represented by the solid line. The round markers and the triangular markers indicate the location of the linear and cubic knots, respectively. The diamond-shaped markers indicate the superimposition of a linear knot and a cubic knot. The data are extracted from the official Daft Punk logo\footnotemark.}
    \label{fig:daftpunk}
\end{figure*}

\section{Conclusion}
We have introduced a framework to reconstruct sparse continuous curves from a list of possibly inaccurate contour points using an RI-TV regularization norm. We have proved that an optimal solution to our minimization problem is a curve that uses splines as basis functions, and we have leveraged this result to provide an exact discretization of the continuous-domain framework using B-splines. Furthermore, we have extended our formulation to reconstruct curves with components of distinct smoothness properties using hybrid splines. We have experimental confirmation of the rotation invariance of our regularizer. In addition, our experimental results demonstrate that our formulation yields sparse reconstructions that are close to the data points even when their noise increases, unlike other regularization methods. Finally, our hybrid-curve experiments demonstrate that we are able to faithfully reconstruct any contour with a low number of knots.

\appendices

\section{Proof of Theorem \ref{Thm:Intuition}}\label{sec:appendix_norm_equivalence}

\begin{proof}
{\bf Item} \ref{Item:L1}: Let $\boldsymbol{\varphi}=(\varphi_1,\varphi_2)\in \mathcal{S}(\mathbb{T}_M)^2$ be an arbitrary smooth curve with $\|\boldsymbol{\varphi}\|_{q,\infty}=1$. On the one hand,    the H\"older inequality for vectors implies that, for any $t\in \mathbb{T}_M$, 
\begin{equation}\label{Eq:CauchySchwartz}
 \left|f_1(t)\varphi_1(t) + f_2(t)\varphi_2(t) \right| \leq \|{\bf f}(t)\|_p \|\boldsymbol{\varphi}(t)\|_q\leq \|{\bf f}(t)\|_p,
\end{equation}
where the last inequality is due to  $\|\boldsymbol{\varphi}\|_{q,\infty}=1$. On the other hand, the inclusion $f_i \in L_1(\mathbb{T}_M)$ allows us to express the duality product $\langle f_i, \varphi_i \rangle$ as a simple integral of the form 
\begin{equation}\label{Eq:duality_product}
    \langle f_i, \varphi_i \rangle = \int_{0}^M f_i(t)\varphi_i(t){\rm d}t, \quad i=1,2.
\end{equation}

Combining \eqref{Eq:duality_product} with \eqref{Eq:CauchySchwartz}, we obtain that
\begin{align}
 \langle f_1,\varphi_1\rangle + \langle f_2, \varphi_2 \rangle &= \int_0^M \left(f_1(t)\varphi_1(t)+f_2(t)\varphi_2(t)\right){\rm d}t \nonumber \\&\leq  \int_0^M \|{\bf f}(t)\|_p {\rm d}t.
\end{align}
Taking the supremum over all $\boldsymbol{\varphi}\in \mathcal{S}(\mathbb{T}_M)^2$ with $\|\boldsymbol{\varphi}\|_{q,\infty}=1$ then yields that $
\norm{[f_1 \ \ f_2]}_{\mathrm{TV}-\ell_p} \leq  \int_0^M \|{\bf f}(t)\|_p{\rm d} t$. To prove the equality, we first define the functions 
\begin{equation}
g_i:\mathbb{T}_M\rightarrow\mathbb{R}: t \mapsto \mathbbm{1}_{{\bf f}(t)\neq \boldsymbol{0}} \frac{{\rm sgn}(f_i(t)) |f_i(t)|^{(p-1)}}{\|{\bf f}(t)\|_p^{(p-1)}}, 
\end{equation}
where $\mathbbm{1}_A$ denotes the indicator function of the set $A$. We note that $g_i,i=1,2$ are Borel-measurable  with $\|g_i\|_{L_\infty} \leq 1$ for $i=1,2$. Further, one readily verifies that 
\begin{equation}\label{Eq:Estim1}
 \int_0^M \left(f_1(t) g_{1}(t)+ f_2(t)g_2(t)\right) {\rm d}t   = \int_0^M \|{\bf f}(t)\|_p {\rm d}t. 
\end{equation}
By invoking a variant of Lusin's theorem (see \cite[Theorem 7.10]
{folland1999real}) on the space $\mathcal{C}(\mathbb{T}_M)$  of M-periodic continuous functions, we then consider the $\epsilon$-approximations $
{g}_{i,\epsilon} \in \mathcal{C}(\mathbb{T}_M)$ of $g_i$ such that $\|
{g}_{i,\epsilon}\|_{L_\infty} \leq \|g_i\|_{L_\infty}\leq 1$, and $
\int_{E}  |f_i(t)|{\rm d}t \leq \epsilon/8,  i=1,2$, 
where  $E=\{t\in \mathbb{T}_M: g_{i,\epsilon}(t) \neq g_i(t)\}$. This in effect implies that 

\begin{align}
&\int_0^M \left| f_i(t)\right| \cdot\left|g_{i,\epsilon}(t)-g_{i}(t)  \right|{\rm d}t \nonumber \\&\qquad= \int_{E} |f_i(t)|\cdot \left|g_{i,\epsilon}(t)-g_{i}(t)\right| {\rm d}t \nonumber \\& \qquad\leq \|\mathbbm{1}_E f_i\|_{L_1} \|g_{i,\epsilon} -g_i\|_{L_{\infty}} \leq \frac{\epsilon}{4} \label{Eq:Estim2}.
\end{align}

We then invoke the denseness of $\mathcal{S}(\mathbb{T}_M)$ in $\mathcal{C}(\mathbb{T}_M)$ to deduce the existence of $\varphi_{i,\epsilon} \in \mathcal{S}(\mathbb{T}_M)$ with $\|g_{i,\epsilon} -\varphi_{i,\epsilon} \|_{L_\infty} \leq \frac{\epsilon}{4\|f_i\|_{L_1}}$. This gives us the upper-bound 

\begin{equation}\label{Eq:Estim3}
\int_0^M \left|f_i(t)\right| \cdot \left|\varphi_{i,\epsilon}(t)-g_{i,\epsilon}(t)\right| {\rm d}t  \leq \|f_i\|_{L_1} \|\varphi_{i,\epsilon}-g_{i,\epsilon}\|_{L_\infty} \leq  \frac{\epsilon}{4}.
\end{equation}

Next, we use the triangle inequality to obtain the lower-bound

\begin{align}
\langle f_i , \varphi_{i,\epsilon} \rangle &\geq \int_0^M f_i(t) g_i(t){\rm d}t \nonumber \\ & \quad - \int_0^M |f_i(t)|\cdot |g_i(t)-g_{i,\epsilon}(t)|{\rm d}t \nonumber \\& \quad -  \int_0^M |f_i(t)|\cdot |g_{i,\epsilon}(t)-\varphi_{i,\epsilon}(t)|{\rm d}t \nonumber
\\& \geq  \int_0^M f_i(t)g_i(t){\rm d}t  - \frac{\epsilon}{2}, \quad i=1,2,
\end{align}

where the last inequality follows the combination of \eqref{Eq:Estim2} and \eqref{Eq:Estim3}. Finally, we use \eqref{Eq:Estim1} to conclude that 
\begin{equation}
    \norm{[f_1 \ \ f_2]}_{\mathrm{TV}-\ell_p} \geq \frac{\langle f_1,\varphi_{1,\epsilon} \rangle + \langle f_2,\varphi_{2,\epsilon} \rangle }{\|(\varphi_{1,\epsilon},\varphi_{2,\epsilon})\|_{q,\infty}} \geq \frac{\int_0^M \|{\bf f}(t)\|_p{\rm d}t - \epsilon}{1+ O(\epsilon)}.
\end{equation}

We complete the proof by letting $\epsilon\rightarrow 0$.

{\bf Item} \ref{Item:Dirac}: Similarly to the previous part, for any smooth curve $\boldsymbol{\varphi}=(\varphi_1,\varphi_2)\in \mathcal{S}(\mathbb{T}_M)^2$ with $\|\boldsymbol{\varphi}\|_{q,\infty}=1$, we have that 
\begin{align}
\langle w_1, \varphi_1 \rangle +\langle w_2, \varphi_2 \rangle &= \sum_{k=0}^{K-1} \left(a_1[k] \varphi_1(t_k)+ a_2[k] \varphi_2(t_k)\right) \\& \leq \sum_{k=0}^{K-1} \|{\bf a}[k]\|_p \|\boldsymbol{\varphi}(t_k)\|_q \leq \sum_{k=0}^{K-1} \|{\bf a}[k]\|_p .
\end{align}

Taking the supremum over $\boldsymbol{\varphi}=(\varphi_1,\varphi_2)$  with $\|\boldsymbol{\varphi}\|_{q,\infty}=1$ then yields that $\|{\bf w}\|_{{\rm TV}-\ell_p} \leq \sum_{k=0}^{K-1} \|{\bf a}[k]\|_p$. To prove the equality, we first  define a set of vectors $\boldsymbol{\varphi}_k\in \mathbb{R}^2$  such that  $\|\boldsymbol{\varphi}_k\|_{\infty}=1$ and  ${\bf a}[k]^T \boldsymbol{\varphi}_k =\|{\bf a}[k]\|_p$ for $k=0, \ldots, K-1$. We then consider a smooth curve $\boldsymbol{\varphi}^*\in\mathcal{S}(\mathbb{T}_M)^2$ with $\|\boldsymbol{\varphi}^*\|_{q,\infty}=1$ such that $\boldsymbol{\varphi}^*(t_k) = \boldsymbol{\varphi}_k$. Using this, we then verify that 
$$\|{\bf w}\|_{{\rm TV}-\ell_p}\geq \langle w_1, \varphi_1^* \rangle +\langle w_2, \varphi_2^*\rangle  = \sum_{k\in\mathbb{Z}} \|{\bf a}[k]\|_p.$$
\end{proof}

\section{Proof of Proposition \ref{Prop:Invariance_L2}}\label{sec:appendix_invariance}

\begin{proof}
By substitution of $\textbf{R}_{\theta}\textbf{f}$ in \eqref{Eq:Reg}, we have that  

\begin{align}\label{eq:proof_invariance_1}
 \mathcal{R}(\textbf{R}_{\theta}\textbf{f}) 
    &= \nonumber \sup_{\substack{\boldsymbol{\varphi}\in \mathcal{S}(\mathbb{T}_M)^2\\ \|\boldsymbol{\varphi}\|_{2,\infty} = 1}} &&\mathbin{\stackunder[3pt]{\left( \langle \cos(\theta) f_1 - \sin(\theta) f_2, \varphi_1 \rangle \right.}{\left. + \ \langle \sin(\theta) f_1 + \cos(\theta) f_2, \varphi_2 \rangle \right)}}\\
    &= \sup_{\substack{\boldsymbol{\varphi}\in \mathcal{S}(\mathbb{T}_M)^2\\ \|\boldsymbol{\varphi}\|_{2,\infty} = 1}} &&\mathbin{\stackunder[3pt]{\left( \langle f_1, \cos(\theta)\varphi_1 + \sin(\theta)\varphi_2 \rangle \right.}{\left. + \ \langle f_2, -\sin(\theta) \varphi_1 + \cos(\theta) \varphi_2 \rangle \right)}}.
\end{align}

We perform the change of variable $\boldsymbol{\psi} = \textbf{R}_{-\theta} \boldsymbol{\varphi}$. We readily conclude that, since $\textbf{R}_{-\theta}$ is bijective over $\mathcal{S}(\mathbb{T}_M)^2$, for any $\boldsymbol{\varphi}\in \mathcal{S}(\mathbb{T}_M)^2$, we have that $\boldsymbol{\psi} = \textbf{R}_{-\theta} \boldsymbol{\varphi} \in \mathcal{S}(\mathbb{T}_M)^2$. Additionally, and in accordance with \eqref{Eq:L2LinfMixed}, we have that

\begin{equation}\label{eq:proof_invariance_2}
    \|\boldsymbol{\psi}\|_{2,\infty} = \sup_{t\in\mathbb{T}_M}\|\boldsymbol{\psi}(t)\|_2 = \sup_{t\in\mathbb{T}_M}\|\textbf{R}_{-\theta} \boldsymbol{\varphi}(t)\|_2 = \sup_{t\in\mathbb{T}_M}\|\boldsymbol{\varphi}(t)\|_2,
\end{equation}

\noindent as $\textbf{R}_{-\theta}$ is an isometry. Hence, it does not change the $\ell_2$ norm of a vector. Consequently, we have that

\begin{align}\label{eq:proof_invariance_3}
    \mathcal{R}(\textbf{R}_{\theta}\textbf{f}) 
    &= \nonumber \sup_{\substack{\boldsymbol{\psi}\in \mathcal{S}(\mathbb{T}_M)^2\\ \| \boldsymbol{\psi}\|_{2,\infty} = 1}} &&\left( \langle f_1,\psi_1\rangle + \langle f_2, \psi_2 \rangle\right)\\
    &= \mathcal{R}(\textbf{f}).
\end{align}

Moreover, according to Item \ref{Item:L1} of Theorem \ref{Thm:Intuition}, for any curve $\textbf{f} = (f_1,f_2)$ with absolutely integrable components $f_i \in L_1(\mathbb{T}_M), \ i= 1,2$, the $\mathrm{TV}-\ell_p$ norm is

\begin{equation}
     \norm{\textbf{f}}_{\mathrm{TV}-\ell_p} = \int_0^M (|f_1(t)|^p + |f_2(t)|^p)^{\frac{1}{p}} \ \mathrm{d}t.
\end{equation}

We take $f_1(t) = 1$, $f_2(t) = 0$, and $\theta = \frac{\pi}{4}$. This gives us

\begin{align}\label{eq:unrotated_lp}
       \norm{\textbf{f}}_{\mathrm{TV}-\ell_p}= 
      \int_0^M (|1|^p + |0|^p)^{\frac{1}{p}} \ \mathrm{d}t = M. 
\end{align}
 
When applying the planar rotation $\textbf{R}_{\theta}$ to the curve $\textbf{f}$, we have that
 
\begin{align}
     \norm{\textbf{R}_{\theta}\textbf{f}}_{\mathrm{TV}-\ell_p} &= 
      \nonumber \int_0^M (|f_1(t) \cos{(\theta)} - f_2(t) \sin{(\theta)}|^p \\
      \nonumber &\quad \quad \quad + |f_1(t) \sin{(\theta)} + f_2(t) \cos{(\theta)}|^p)^{\frac{1}{p}} \ \mathrm{d}t \\
      \nonumber &=\int_0^M (|\cos{(\theta)}|^p + |\sin{(\theta)}|^p)^{\frac{1}{p}} \ \mathrm{d}t \\
       &=\int_0^M \left ( 2 \left|\frac{\sqrt{2}}{2}\right|^p \right ) ^{\frac{1}{p}} \ \mathrm{d}t = 2^{\frac{1}{p} - \frac{1}{2}} M. \label{eq:rotated_lp}
\end{align}
 
We conclude that $\norm{\textbf{f}}_{\mathrm{TV}-\ell_p} = \norm{\textbf{R}_{\theta}\textbf{f}}_{\mathrm{TV}-\ell_p}$ if and only if $p = 2$, which proves that the $\mathrm{TV}-\ell_p$ norm is not rotation invariant for $p \neq 2$.
\end{proof}

\section{Representer Theorem}\label{sec:appendix_rep_theorem}
We start by providing the necessary tools before going into the proof of Theorem \ref{Thm:Rep} (see \cite{fageot2020tv} for more details). Specifically, we first describe the topological structure of the search space $\mathcal{X}_{\rm L}(\mathbb{T}_M)$. We then identify the set of extreme points of the RI-TV unit ball. Finally, we provide a full characterization of the solution set $\mathcal{V}$, from which we conclude Theorem \ref{Thm:Rep}.

\subsection{Search Space}

The space of periodic finite Radon measures is denoted by $\mathcal{M}(\mathbb{T}_M)$. It is a Banach space equipped with the total-variation norm

\begin{equation}
    \|w\|_{\rm TV} \overset{\Delta}{=} \sup_{\substack{\varphi\in \mathcal{S}(\mathbb{T}_M)\\ \|\varphi\|_{\infty}=1}} \langle w, \varphi\rangle.
\end{equation}

Subsequently, the native space associated to the pair $\left({\rm L}, \mathcal{M}(\mathbb{T}_M)\right)$  is defined as  $\mathcal{M}_{{\rm L}}(\mathbb{T}_M)=\{f\in \mathcal{S}'(\mathbb{T}_M): \|{\rm L}\{f\}\|_{\rm TV}<+\infty \}$. It has been shown that $\mathcal{M}_{{\rm L}}(\mathbb{T}_M)$ is isometrically isomorphic to $\mathcal{M}_0(\mathbb{T}_M) \times \mathbb{R}$, where $\mathcal{M}_0(\mathbb{T}_M)=\{ w\in \mathcal{M}(\mathbb{T}_M): \langle w, 1 \rangle=0\}$ is the space of  Radon measures with zero mean. The explicit form of such an isometry between spaces (and its inverse) is given by

\begin{align}
& \mathcal{M}_{{\rm L}}(\mathbb{T}_M)\rightarrow\mathcal{M}_0(\mathbb{T}_M) \times \mathbb{R}: f\mapsto \left({\rm L}\{f\}, \langle f,1\rangle \right), \nonumber \\
& \mathcal{M}_0(\mathbb{T}_M) \times \mathbb{R} \rightarrow \mathcal{M}_{{\rm L}}(\mathbb{T}_M):  \left(w, a\right) \mapsto {\rm L}^\dag \{w\}+ a, \label{eq:iso_inv}
\end{align}

where ${\rm L}^\dag$ is the pseudoinverse of ${\rm L}$. Finally, we note that the Green's function of ${\rm L}={\rm D}^{(\alpha+1)}$,  defined as $g_{\rm L} = {\rm L}^\dag \{\Sha\}$, is a continuous periodic function for all integers $\alpha \geq 1$ \cite{fageot2020tv}.   

We are now ready to characterize the topological structure of the  search space $\mathcal{X}_{\rm L}(\mathbb{T}_M)$ defined in \eqref{Eq:DefNative}.    

\footnotetext{\url{https://en.wikipedia.org/wiki/Daft_Punk}}

\begin{proposition}\label{Prop:Native}
The search space $\mathcal{X}_{\rm L}(\mathbb{T}_M)$ can be expressed as 
\begin{equation}\label{eq:equivalence}
\mathcal{X}_{\rm L}(\mathbb{T}_M)= \mathcal{M}_{{\rm L}}(\mathbb{T}_M)\times \mathcal{M}_{{\rm L}}(\mathbb{T}_M).
\end{equation}

Moreover, the mapping 
\begin{align}
&T_{\rm L}:\mathcal{X}_{{\rm L}}(\mathbb{T}_M)\rightarrow\mathcal{M}_0(\mathbb{T}_M)^2 \times \mathbb{R}^2 \nonumber \\ 
& T_{\rm L}({\bf r}) = \left({\rm L}\{r_1\},{\rm L}\{r_2\},  \langle r_1,1\rangle, \langle r_2,1\rangle  \right) 
\end{align}

is an isomorphism between $\mathcal{X}_{\rm L}(\mathbb{T}_M)$ and $\mathcal{M}_0(\mathbb{T}_M)^2 \times \mathbb{R}^2$ whose inverse is

\begin{align}
&T^{-1}_{\rm L}: \mathcal{M}_0(\mathbb{T}_M)^2 \times \mathbb{R}^2 \rightarrow \mathcal{M}_{{\rm L}}(\mathbb{T}_M) \nonumber \\
& T^{-1}_{\rm L}\left({\bf w}, {\bf a}\right) = \left({\rm L}^\dag \{w_1\}+ a_1,{\rm L}^\dag \{w_2\}+ a_2\right). 
\end{align}
\end{proposition}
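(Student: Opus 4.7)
The goal is to prove two things: (i) the set equality $\mathcal{X}_{\rm L}(\mathbb{T}_M)=\mathcal{M}_{\rm L}(\mathbb{T}_M)^2$, and (ii) that $T_{\rm L}$ is an isomorphism onto $\mathcal{M}_0(\mathbb{T}_M)^2\times \mathbb{R}^2$ with the stated inverse. The whole argument reduces to a componentwise reduction followed by an application of the (already cited) scalar isometric isomorphism $\mathcal{M}_{\rm L}(\mathbb{T}_M)\simeq\mathcal{M}_0(\mathbb{T}_M)\times \mathbb{R}$ from \cite{fageot2020tv}.

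\medskip\noindent
\textbf{Step 1: Reduction via a norm equivalence.} The plan is to establish the two-sided bound
\begin{equation}
\max\bigl(\|w_1\|_{\rm TV},\|w_2\|_{\rm TV}\bigr)\leq \|[w_1\ w_2]\|_{{\rm TV}-\ell_2}\leq \|w_1\|_{\rm TV}+\|w_2\|_{\rm TV}
\end{equation}
for any ${\bf w}=(w_1,w_2)\in \mathcal{S}'(\mathbb{T}_M)^2$. The lower bound is obtained by plugging the test curves $(\varphi_1,0)$ and $(0,\varphi_2)$ in the definition \eqref{Eq:Reg}, noting that $\|(\varphi,0)\|_{2,\infty}=\|\varphi\|_{\infty}$. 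The upper bound follows from the pointwise inequality $|\varphi_i(t)|\leq \|\boldsymbol{\varphi}(t)\|_2\leq 1$, which grants that each $\varphi_i$ is admissible in the definition of $\|w_i\|_{\rm TV}$. Applying this equivalence to ${\bf w}={\rm L}\{{\bf r}\}$ yields that $\mathcal{R}({\rm L}\{{\bf r}\})<+\infty$ if and only if both $\|{\rm L}\{r_1\}\|_{\rm TV}$ and $\|{\rm L}\{r_2\}\|_{\rm TV}$ are finite, which is precisely \eqref{eq:equivalence}.

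\medskip\noindent
\textbf{Step 2: Componentwise application of the scalar isomorphism.} Next, I invoke the scalar result recalled in \eqref{eq:iso_inv}: the map $U:f\mapsto ({\rm L}\{f\},\langle f,1\rangle)$ is an isometric isomorphism $\mathcal{M}_{\rm L}(\mathbb{T}_M)\to\mathcal{M}_0(\mathbb{T}_M)\times \mathbb{R}$ with inverse $(w,a)\mapsto {\rm L}^\dag\{w\}+a$. Forming the product map $U\times U$ on $\mathcal{M}_{\rm L}(\mathbb{T}_M)^2$ and composing with the trivial reordering $(w_1,a_1,w_2,a_2)\mapsto(w_1,w_2,a_1,a_2)$ yields exactly the map $T_{\rm L}$. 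Since a product of isomorphisms is an isomorphism (and a reordering of coordinates in a finite product space is a linear homeomorphism), $T_{\rm L}$ is an isomorphism. Writing out the corresponding inverse of the product map gives the claimed formula for $T_{\rm L}^{-1}$, namely $(w_1,w_2,a_1,a_2)\mapsto ({\rm L}^\dag\{w_1\}+a_1,{\rm L}^\dag\{w_2\}+a_2)$.

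\medskip\noindent
\textbf{Step 3: Verification of inversion relations.} To close the proof cleanly, I would verify $T_{\rm L}\circ T_{\rm L}^{-1}=\mathrm{Id}$ and $T_{\rm L}^{-1}\circ T_{\rm L}=\mathrm{Id}$ directly, using the two defining identities of the pseudoinverse, namely ${\rm L}\{{\rm L}^\dag\{w\}\}=w$ for $w\in \mathcal{M}_0(\mathbb{T}_M)$ and $\langle {\rm L}^\dag\{w\},1\rangle=0$ (so that the scalar component of $T_{\rm L}({\rm L}^\dag\{w\}+a)$ reduces to $a$). These identities are already part of the framework of \cite{fageot2020tv}.

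\medskip\noindent
\textbf{Expected obstacles.} None of the steps is delicate: Step 1 is an elementary duality computation, and Steps 2 and 3 are purely formal once the scalar isomorphism \eqref{eq:iso_inv} is granted. The only care-point is to be explicit that the two-sided norm bound above ensures set equality in \eqref{eq:equivalence} (the containment $\mathcal{M}_{\rm L}(\mathbb{T}_M)^2\subseteq \mathcal{X}_{\rm L}(\mathbb{T}_M)$ uses the upper bound, the reverse containment uses the lower bound), which is the real content of the proposition beyond a relabelling of coordinates.
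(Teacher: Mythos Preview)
Your proposal is correct and follows essentially the same route as the paper: the paper also proves the set equality \eqref{eq:equivalence} via the two-sided bound $\max_i\|{\rm L}\{r_i\}\|_{\rm TV}\le \mathcal{R}({\rm L}\{\mathbf{r}\})\le \sum_i\|{\rm L}\{r_i\}\|_{\rm TV}$ (using the same test curves $(\varphi_1,0)$, $(0,\varphi_2)$ for the lower bound and the pointwise inequality $\|\varphi_i\|_\infty\le\|\boldsymbol{\varphi}\|_{2,\infty}$ for the upper bound), and then concludes that $T_{\rm L}$ is an isomorphism by applying the scalar isomorphism \eqref{eq:iso_inv} componentwise. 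Your Step~3 is an extra explicit verification that the paper omits, but it is harmless and the overall argument is the same.
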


\begin{proof}
Let ${\bf r}=(r_1,r_2)\in\mathcal{X}_{\rm L}(\mathbb{T}_M)$. We have that

\begin{align}
\mathcal{R}\left({\rm L}\{{\bf r}\}\right) & = \sup_{\substack{\boldsymbol{\varphi}\in \mathcal{S}(\mathbb{T}_M)^2\\ \|\boldsymbol{\varphi}\|_{2,\infty} = 1}} \left( \langle {\rm L}\{  r_1\},\varphi_1\rangle + \langle {\rm L} \{r_2\}, \varphi_2 \rangle\right) 
\nonumber\\& \geq \sup_{\substack{{\varphi}_1\in \mathcal{S}(\mathbb{T}_M)\\ \|(\varphi_1,0)\|_{2,\infty} = 1}} \langle {\rm L} \{r_1\},\varphi_1\rangle = \sup_{\substack{{\varphi}_1\in \mathcal{S}(\mathbb{T}_M)\\ \|\varphi_1\|_{\infty} = 1}}  \langle {\rm L} \{r_1\},\varphi_1\rangle\nonumber \\&= \|{\rm L}\{r_1\}\|_{\rm TV}, \label{eq:prodRI}
\end{align}

from which we deduce that $r_1 \in \mathcal{M}_{{\rm L}}(\mathbb{T}_M)$. Similarly, we get that $r_2 \in \mathcal{M}_{{\rm L}}(\mathbb{T}_M)$ and, hence, we have that $\mathcal{X}_{\rm L}(\mathbb{T}_M) \subseteq \left(\mathcal{M}_{{\rm L}}(\mathbb{T}_M)\right)^2$. For the reverse inclusion, let $r_1,r_2 \in \mathcal{M}_{\rm L}(\mathbb{T}_M)$. Using the inequalities $\|\boldsymbol{\varphi}\|_{2,\infty} \geq \|\varphi_i\|_{\infty}$ for $i=1,2$, we deduce that 

\begin{equation}
\left|\langle {\rm L}\{r_i\}, \varphi_i  \rangle \right| \leq \|{\rm L} \{r_i\}\|_{\rm TV} \|\varphi_i\|_\infty \leq  \|{\rm L}\{ r_i\}\|_{\rm TV}  \|\boldsymbol{\varphi}\|_{2,\infty}.
\end{equation}

Hence, we have that

\begin{align}
&\langle {\rm L}\{r_1\}, \varphi_1  \rangle +\langle {\rm L}\{r_2\}, \varphi_2  \rangle  \leq \nonumber \\ 
& \quad \left(\|{\rm L} \{r_1\}\|_{\rm TV} +\|{\rm L} \{r_2\}\|_{\rm TV}  \right) \|\boldsymbol{\varphi}\|_{2,\infty},
\end{align}

which  implies that 

\begin{equation}\label{eq:RIprod}
\mathcal{R}\left({\rm L}\{{\bf r}\}\right) \leq \|{\rm L} \{r_1\}\|_{\rm TV} +\|{\rm L} \{r_2\}\|_{\rm TV}  < +\infty.
\end{equation}

Hence, we have the inclusion ${\bf r}\in\mathcal{X}_{\rm L}(\mathbb{T}_M)$. 

Following \eqref{eq:RIprod} and \eqref{eq:prodRI}, we deduce that the norm topology of $\mathcal{X}_{\rm L}(\mathbb{T}_M)$ is equivalent to the product topology induced from $\mathcal{M}_{\rm L}(\mathbb{T}_M)\times \mathcal{M}_{\rm L}(\mathbb{T}_M)$. This, together with the fact that $\mathcal{M}_{\rm L}(\mathbb{T}_M)$ is isometrically isomorphic to $\mathcal{M}_0(\mathbb{T}_M)\times \mathbb{R}$, implies that $T_{\rm L}$ is an isomorphism. Its inverse is readily deduced from \eqref{eq:iso_inv}.  
\end{proof}

\subsection{Extreme Points of the RI-TV Unit Ball}
Our strategy to characterize the solution set $\mathcal{V}$ defined in \eqref{eq:continuous_problem} consists of invoking the main result of Boyer {\it et al.} \cite{boyer2019representer}, which requires the knowledge of the form of extreme points of the unit ball of the regularization functional. To that end, we prove that the extreme points of the RI-TV unit ball are vector-valued Dirac combs. 

\begin{proposition}\label{Prop:EP_RITV}
 An element ${\bf w}^*\in \mathcal{M}(\mathbb{T}_M)^2$ is an extreme point of the RI-TV unit ball $B=\{ {\bf w}\in \mathcal{M}(\mathbb{T}_M)^2: \mathcal{R}({\bf w}) =1\}$ if and only if it is a vector-valued Dirac comb of the form ${\bf w}^* = {\bf a}\Sha_M(\cdot -t_0)$ for some $t_0\in\mathbb{T}_M$ and ${\bf a}\in\mathbb{R}^2$ with $\|{\bf a}\|_2=1$.
\end{proposition}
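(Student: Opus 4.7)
The plan is to prove the two implications separately, using in both directions the additivity of the RI-TV norm on measures with disjoint supports, namely $\mathcal{R}({\bf u}+{\bf v}) = \mathcal{R}({\bf u}) + \mathcal{R}({\bf v})$ whenever $\supp({\bf u}) \cap \supp({\bf v}) = \emptyset$, together with strict convexity of the Euclidean norm.

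For sufficiency, I would take ${\bf w}^* = {\bf a}\Sha_M(\cdot - t_0)$ with $\|{\bf a}\|_2 = 1$ (so $\mathcal{R}({\bf w}^*) = 1$ by Item \ref{Item:Dirac} of Theorem \ref{Thm:Intuition}) and a non-trivial convex decomposition ${\bf w}^* = \theta{\bf w}_1 + (1-\theta){\bf w}_2$ with $\mathcal{R}({\bf w}_i) \leq 1$. Splitting each ${\bf w}_i$ into its atomic part ${\bf b}_i$ at $t_0$ plus a residual ${\bf w}_i'$ supported off $t_0$, the decomposition forces $\theta{\bf w}_1' + (1-\theta){\bf w}_2' = \boldsymbol{0}$, so the residuals are proportional and $\mathcal{R}({\bf w}_2') = \frac{\theta}{1-\theta}\mathcal{R}({\bf w}_1')$. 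Writing $s = \mathcal{R}({\bf w}_1')$, additivity converts the constraints $\mathcal{R}({\bf w}_i) \leq 1$ into $\|{\bf b}_1\|_2 \leq 1-s$ and $\|{\bf b}_2\|_2 \leq 1 - \frac{\theta}{1-\theta}s$. The triangle inequality then yields
\[
1 = \|{\bf a}\|_2 \leq \theta\|{\bf b}_1\|_2 + (1-\theta)\|{\bf b}_2\|_2 \leq 1 - 2\theta s,
\]
forcing $s = 0$ and ${\bf w}_1' = {\bf w}_2' = \boldsymbol{0}$. Strict convexity of $\|\cdot\|_2$ then imposes ${\bf b}_1 = {\bf b}_2 = {\bf a}$, hence ${\bf w}_1 = {\bf w}_2 = {\bf w}^*$.

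For necessity, let ${\bf w}^*$ be extreme; a standard perturbation argument forces $\mathcal{R}({\bf w}^*) = 1$. Suppose $\supp({\bf w}^*)$ contained two distinct points $t_1 \neq t_2$. Choosing disjoint open neighborhoods $U_1, U_2$ (so that $\mathcal{R}({\bf w}^*|_{U_i})>0$ by the defining property of the support) and setting ${\bf u}_1 = {\bf w}^*|_{U_1}$, ${\bf u}_2 = {\bf w}^* - {\bf u}_1$, additivity gives $\alpha := \mathcal{R}({\bf u}_1) \in (0,1)$ and $\mathcal{R}({\bf u}_2) = 1 - \alpha$. Then ${\bf w}^* = \alpha({\bf u}_1/\alpha) + (1-\alpha)({\bf u}_2/(1-\alpha))$ is a non-trivial convex combination of two distinct unit-norm elements of $B$ (their supports are disjoint), contradicting extremality. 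Hence $\supp({\bf w}^*) = \{t_0\}$, so ${\bf w}^* = {\bf a}\Sha_M(\cdot - t_0)$ and $\|{\bf a}\|_2 = 1$ by Item \ref{Item:Dirac}.

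The main obstacle is establishing this additivity. I would prove it either via a Radon-Nikodym representation $\mathcal{R}({\bf w}) = \int_{\mathbb{T}_M} \|{\rm d}{\bf w}/{\rm d}\nu\|_2\,{\rm d}\nu$ for any common nonnegative dominating measure $\nu$ (the $\ell_2$-vector analogue of the classical total-variation identity), which makes additivity immediate by splitting the integration domain; or directly from the dual definition \eqref{Eq:Reg} via a smooth partition-of-unity argument that concentrates admissible test functions $\boldsymbol{\varphi}$ separately on disjoint neighborhoods of the two supports and passes to the limit. The delicate point in either route is keeping the test functions inside $\mathcal{S}(\mathbb{T}_M)^2$ with mixed norm $\|\cdot\|_{2,\infty}$ bounded by one while approximating characteristic-like behavior on disjoint regions.
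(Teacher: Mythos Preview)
Your proposal is correct and follows essentially the same route as the paper: both directions rest on the additivity $\mathcal{R}({\bf u}+{\bf v})=\mathcal{R}({\bf u})+\mathcal{R}({\bf v})$ for disjointly supported measures together with the strict convexity of $\|\cdot\|_2$, and the paper simply asserts this additivity without the justification you sketch. The only cosmetic differences are that the paper fixes $\theta=\tfrac12$ in the sufficiency step and, for necessity, writes the contradiction as a $(1\pm\epsilon,\,1\mp\delta)$ perturbation with $\epsilon\,\mathcal{R}({\bf w}_1)=\delta\,\mathcal{R}({\bf w}_2)$ rather than your direct convex combination ${\bf u}_i/\mathcal{R}({\bf u}_i)$.
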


\begin{proof}
Assume by contradiction that there exists an extreme point ${\bf w}^*$ of $B$ that is not a Dirac comb. This implies that  there exists an interval $I\subseteq \mathbb{T}_M$ such that ${\bf w}_1={\bf w}^* \mathbbm{1}_I$ and ${\bf w}_2={\bf w}^* \mathbbm{1}_{I^c}$ are both nonzero Radon measures that satisfy ${\bf w}^* = {\bf w}_1 + {\bf w}_2$. Due to their disjoint support, we have that $\mathcal{R}({\bf w}^*)= \mathcal{R}(\mathbf{w}_1)+ \mathcal{R}(\mathbf{w}_2)$. Let us now define the measures 
${\bf w}_+ = (1+\epsilon) {\bf w}_1 + (1-\delta){\bf w}_2$ and ${\bf w}_- = (1-\epsilon) 
{\bf w}_1 + (1+\delta){\bf w}_2$, where $\epsilon,\delta>0$ are small constants such that 
$\epsilon \mathcal{R}({\bf w}_1) = \delta \mathcal{R}({\bf w}_2)$. By observing  that $\mathcal{R}({\bf 
w}_{+}) = \mathcal{R}({\bf w}_{-}) = 1$ and ${\bf w}^*= \frac{{\bf w}_{+}+{\bf w}_{-}}{2}$, we conclude that ${\bf w}^*$ is not an extreme point of $B$, which yields a contradiction. Hence, the  extreme points of $B$
can only be vector-valued Dirac combs. 

To prove the reverse inclusion, let ${\bf w}^* ={\bf a}\Sha_M(\cdot -t_0)$ with $\|{\bf a}\|_2=1$. We now prove that ${\bf w}^*$ is an extreme point of $B$. Assume that there exist ${\bf w}_1, {\bf w}_2\in B$ such that ${\bf w}^*= \frac{1}{2} ({\bf w}_1 + {\bf w}_2)$. Let us define the measure ${\bf w}_0= {\bf w}_1 \mathbbm{1}_{t\neq t_0}\in\mathcal{M}(\mathbb{T}_M)^2$ so that ${\bf w}_1= {\bf w}_0 + {\bf a}_1\Sha_M(\cdot -t_0)$ for some ${\bf a}_1\in\mathbb{R}^2$. We then must have ${\bf w}_2= (-{\bf w}_0) + {\bf a}_2\Sha_M(\cdot -t_0)$ with ${\bf a}=\frac{1}{2}({\bf a}_1+{\bf a}_2)$. The construction implies that

\begin{equation}
    1=\mathcal{R}({\bf w}_i) = \mathcal{R}({\bf w}_0) + \|{\bf a}_i\|_2 \geq \|{\bf a}_i\|_2, i = 1, 2.
\end{equation}

This, together with the triangle inequality, yields

\begin{equation}
    2= \|2{\bf a}\|_2 \leq \|{\bf a}_1\|_2 +\|{\bf a}_2\|_2 \leq 1+ 1 =2.
\end{equation}

Hence, all inequalities must be saturated. In particular, we must have that ${\bf w}_0=\boldsymbol{0}$ and $\|{\bf a}\|_2= \frac{1}{2}(\|{\bf a}_1\|_2+\|{\bf a}_2\|_2)$. Finally, we invoke the strict convexity of the $\ell_2$ norm to conclude that ${\bf a}={\bf a}_1={\bf a}_2$ and, thus, that ${\bf w}_1 = {\bf w}_2$, which in turn implies that ${\bf w}^*$ is an extreme point of $B$.
\end{proof}

\subsection{Representer Theorem}
We now provide a complete characterization of the solution set $
\mathcal{V}$ in \eqref{eq:continuous_problem} from which we readily deduce Theorem \ref{Thm:Rep} as a corollary. 
\begin{theorem}
  The solution set \eqref{eq:continuous_problem} is nonempty, convex, and weak*-compact. Moreover, any extreme point ${\bf r}^*$ of  $\mathcal{V}$ is a periodic ${\rm L}$-spline that satisfies   \eqref{eq:extreme_point}.
\end{theorem}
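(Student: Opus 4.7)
The plan is to invoke the abstract representer theorem of Boyer \emph{et al.}~\cite{boyer2019representer}, which asserts that for convex optimization problems with finitely many linear measurements, weak*-lower-semicontinuous coercive regularizer, and weak*-compact search-space unit ball, the solution set is nonempty, convex, weak*-compact, and admits extreme points that are finite convex combinations of extreme points of the regularizer unit ball, with an explicit bound on the number of atoms. Proposition~\ref{Prop:EP_RITV} already supplies the extreme points of the RI-TV unit ball as vector-valued Dirac combs with unit $\ell_2$ amplitude, so once the abstract result applies, the spline form \eqref{eq:extreme_point} follows immediately.

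The first step is to equip $\mathcal{X}_{\rm L}(\mathbb{T}_M)$ with a weak* topology compatible with the abstract framework. Proposition~\ref{Prop:Native} provides the isomorphism $T_{\rm L}:\mathcal{X}_{\rm L}(\mathbb{T}_M)\to \mathcal{M}_0(\mathbb{T}_M)^2\times\mathbb{R}^2$, and since $\mathcal{M}_0(\mathbb{T}_M)$ is the topological dual of the predual space of continuous zero-mean periodic functions, this transfers a weak* structure to $\mathcal{X}_{\rm L}(\mathbb{T}_M)$. Under this topology, Banach--Alaoglu implies that the sublevel sets of $\mathcal{R}({\rm L}\{\cdot\})$, modulo the two-dimensional null space $\ker({\rm L})=\mathbb{R}^2$, are weak*-compact. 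The measurement functionals ${\bf r}\mapsto {\bf r}(m)$ must then be shown to be weak*-continuous on $\mathcal{X}_{\rm L}(\mathbb{T}_M)$; this uses that each component $r_i = {\rm L}^\dagger\{w_i\}+a_i = (g_{\rm L}*w_i) + a_i$ (via \eqref{eq:iso_inv}) is evaluated by pairing $w_i$ with the continuous kernel $t\mapsto g_{\rm L}(m-t)$, which is a legitimate test function against $\mathcal{M}_0(\mathbb{T}_M)$ since $g_{\rm L}$ is continuous and periodic.

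The second step is existence and compactness of $\mathcal{V}$. The data-fidelity term is quadratic and weak*-continuous (being a continuous function of the $2M$ measurement functionals), and $\mathcal{R}({\rm L}\{\cdot\})$ is weak*-lower-semicontinuous as a supremum of weak*-continuous linear forms. Coercivity along the quotient $\mathcal{X}_{\rm L}(\mathbb{T}_M)/\ker({\rm L})$ follows from the regularization term, while along $\ker({\rm L})=\mathbb{R}^2$ the data-fidelity term alone is coercive because any constant offset of the entire curve is detected by the measurements. Therefore the direct method yields that $\mathcal{V}$ is nonempty, and weak*-compact and convex, being the intersection of a closed convex sublevel set with a compact set.

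The third step is the spline structure of the extreme points. Applying the representer theorem of~\cite{boyer2019representer} (formulated for seminorms with finite-dimensional null space, as in~\cite{UnserRepresenterList}), every extreme point ${\bf r}^*\in\mathcal{V}$ satisfies ${\rm L}\{{\bf r}^*\} = \sum_{k=0}^{K-1} \mu_k {\bf e}_k$, where ${\bf e}_k$ are extreme points of the unit ball of $\mathcal{R}$ and $K$ is at most the dimension of the range of the measurement operator, namely $2M$. By Proposition~\ref{Prop:EP_RITV}, each ${\bf e}_k$ is of the form ${\bf b}_k\Sha_M(\cdot-t_k)$ with $\|{\bf b}_k\|_2=1$, so setting ${\bf a}_k=\mu_k{\bf b}_k$ yields exactly \eqref{eq:extreme_point} with at most $2M$ distinct knots. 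The main obstacle I foresee is the careful verification of weak*-continuity of the sampling functionals and the clean handling of the two-dimensional translation null space of ${\rm L}$, both of which must be arranged so that the hypotheses of the abstract theorem are satisfied without losing the sharp bound $K\leq 2M$.
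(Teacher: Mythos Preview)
Your proposal is correct and follows essentially the same route as the paper: transfer the problem via the isomorphism $T_{\rm L}$ of Proposition~\ref{Prop:Native}, verify weak*-continuity of the sampling functionals through the inclusion $g_{\rm L}\in\mathcal{C}(\mathbb{T}_M)$, invoke an abstract representer theorem, and read off the spline form from Proposition~\ref{Prop:EP_RITV} with the bound $K\le 2M$. The only cosmetic difference is that the paper pushes the problem forward to an equivalent problem $\tilde{\mathcal{V}}$ on $\mathcal{M}(\mathbb{T}_M)^2\times\mathbb{R}^2$ (encoding the zero-mean constraint via a characteristic function) and cites \cite[Theorem~3]{unser2021convex} rather than \cite{boyer2019representer} directly, whereas you work in place on $\mathcal{X}_{\rm L}(\mathbb{T}_M)$ with the transferred weak* topology; both arguments rest on the same ingredients.
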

\begin{proof}
Let us define the cost functional $E:\mathcal{M}(\mathbb{T}_M)^2 \times \mathbb{R}^2\rightarrow \mathbb{R}\cup\{+\infty\}$ as 

\begin{equation}
E({\bf w}, {\bf a})= \sum_{m=0}^{M-1} \norm{\boldsymbol{\nu}_m({\bf w}) + {\bf a} - \textbf{p}[m]}_2^2    +\sum_{i=1}^2 \chi_{\langle w_i,1 \rangle = \boldsymbol{0}},
\end{equation}
where $\chi_A$ denotes the characteristic function of the set $A$, and  $\boldsymbol{\nu}_m= (\nu_{m,1}, \nu_{m,2})$ with 

\begin{align}
\nu_{m,i}({\bf w})&= \left.\left( {\rm L}^\dag\{w_i\}(t)\right)\right\vert_{t=m} = \langle {\rm L}^\dag  \{w_i\}, \Sha(\cdot -m)\rangle \nonumber \\
&= \langle w_i, {\rm L}^{\dag *}  \{\Sha(\cdot -m)\}\rangle = \langle w_i, g_{\rm L}(m-\cdot)\rangle 
\end{align}
for $i=1,2$. We note that $\boldsymbol{\nu}_m$ is weak*-continuous in the topology of $\mathcal{M}_{\rm L}(\mathbb{T}_M)$ due to the inclusion $g_{\rm L}\in \mathcal{C}(\mathbb{T}_M)$. 

Then, we formulate a minimization problem that admits the solution set
\begin{align}\label{eq:eqv_min}
    \tilde{\mathcal{V}}=\argmin_{\substack{{\bf w} \in \mathcal{M}(\mathbb{T}_M)^2 \\ {\bf a}\in \mathbb{R}^2}}   \left( \mathcal{J}({\bf w}, {\bf a}) =  E({\bf w}, {\bf a}) +  \lambda \mathcal{R}({\bf w}) \right).
\end{align}
Following the general representer theorem of Unser and Aziznejad for the minimization of seminorms \cite[Theorem 3]{unser2021convex}, we deduce that the solution set $
\tilde{V}$ is nonempty, convex, and weak*-compact. Moreover, any extreme point  of $\tilde{\mathcal{V}}$ can be written as $({\bf w}^*,{\bf a}^*)$, where ${\bf w}^*= \sum_{k=0}^{K-1} {\bf a}_k \Sha(\cdot-t_k)$ with $K\leq 2M$   for some ${\bf a}_k \in\mathbb{R}^2$ and $t_k\in\mathbb{T}_M$. 

The final step is observe that the isomorphism $T_{\rm L}$ defined in Proposition \ref{Prop:Native} allows us to state that

\begin{equation}
    E(T_{\rm L}({\bf r})) =\sum_{m=0}^{M-1} \norm{\left.\textbf{r}(t)\right\vert_{t=m} - \textbf{p}[m]}_2^2
\end{equation}
for any ${\bf r}\in \mathcal{X}_{\rm L}(\mathbb{T}_M)$, from which we conclude that $\tilde{\mathcal{V}}=T_{\rm L}\left(\mathcal{V}\right)$. Hence, the solution set $\mathcal{V}= T^{-1}_{\rm L}(\tilde{\mathcal{V}})$ is  nonempty, convex, and weak*-compact, and any  extreme point ${\bf r}^*$ of $\mathcal{V}$ induces an extreme point $({\bf w}^*,{\bf a}^*)= T_{\rm L}({\bf r}^*)$ of $\tilde{V}$. In particular, we have that

\begin{equation}
    {\rm L}\{{\bf r}^*\} = {\bf w}^* =\sum_{k=0}^{K-1} {\bf a}_k \Sha(\cdot-t_k), \quad K\leq 2M.
\end{equation}

\end{proof} 

\section{Proof of Theorem \ref{Thm:Hybrid}} \label{sec:app_thmhyb}
 \begin{proof}
By invoking Proposition \ref{Prop:Native},  we deduce that there is a bijection between $\mathcal{V}_{\rm hyb}$ and the solution set 
\begin{equation}\label{pb:hyb_red}
    \tilde{\mathcal{V}}_{\rm hyb} = \argmin_{\substack{{\bf w}_1,{\bf w}_2\in \mathcal{M}_0(\mathbb{T})^2\\{\bf a}_1, {\bf a}_2 \in \mathbb{R}^2}}  E\left( {\bf w}_1,  {\bf a}_1,   {\bf w}_2,  {\bf a}_2\right)+  \lambda_1 \mathcal{R}({\bf w}_1) + \lambda_2 \mathcal{R}({\bf w}_2) ,
\end{equation}
where the data fidelity cost $E: \left( \mathcal{M}(\mathbb{T})^2\times  \mathbb{R}^2\right)^2 \rightarrow\mathbb{R}_{\geq 0}$ satisfies 
\begin{equation}
E( T_{{\rm L}_1}({\bf r}_1) , T_{{\rm L}_2}({\bf r}_2) ) = \sum_{m=0}^{M-1} \norm{\left.\textbf{r}_1(t)\right\vert_{t=m} + \left.\textbf{r}_2(t)\right\vert_{t=m} - \textbf{p}[m]}_2^2. 
\end{equation}
This implies that there is a bijection between $\mathcal{V}_{\rm hyb}$ and $\tilde{\mathcal{V}}_{\rm hyb}$.  The last step is to note that for any extreme point  $({\bf w}_1^*,{\bf w}_2^*)$ of the unit ball $\{({\bf w}_1,{\bf w}_2)\in \mathcal{M}(\mathbb{T})^4: \lambda_1 \mathcal{R}({\bf w}_1) + \lambda_2 \mathcal{R}({\bf w}_2)\leq 1 \}$,  we have that ${\bf w}_1^*=\boldsymbol{0}$ or ${\bf w}_2^*= \boldsymbol{0}$.  This together with \cite[Theorem 3 ]{unser2021convex} concludes the proof. 
\end{proof} 
\section*{Acknowledgments}

The authors would like to thank Julien Fageot for his useful comments on the manuscript as well as the fruitful discussions deriving from them.

\bibliography{bibtex/bib/IEEEabrv.bib,bibtex/bib/bibliography.bib}{}
\bibliographystyle{IEEEtran}

\end{document}